\LetLtxMacro{\originaleqref}{\eqref}
\renewcommand{\eqref}{Eq.~\originaleqref}
\definecolor{codegreen}{rgb}{0,0.6,0}
\definecolor{codegray}{rgb}{0.5,0.5,0.5}
\definecolor{codepurple}{rgb}{0.58,0,0.82}
\definecolor{backcolour}{rgb}{0.95,0.95,0.92}
\lstdefinestyle{mystyle}{
    backgroundcolor=\color{backcolour},   
    commentstyle=\color{codegreen},
    keywordstyle=\color{magenta},
    numberstyle=\tiny\color{codegray},
    stringstyle=\color{codepurple},
    basicstyle=\ttfamily\footnotesize,
    breakatwhitespace=false,         
    breaklines=true,                 
    captionpos=b,                    
    keepspaces=true,                 
    numbers=left,                    
    numbersep=5pt,                  
    showspaces=false,                
    showstringspaces=false,
    showtabs=false,                  
    tabsize=2
}
\renewcommand{\triangleq}{\coloneqq}
\newif\ifarxiv
\newcounter{IEEE@bibentries}
\renewcommand\IEEEtriggeratref[1]{%
  \renewbibmacro{finentry}{%
    \stepcounter{IEEE@bibentries}%
    \ifthenelse{\equal{\value{IEEE@bibentries}}{#1}}
    {\finentry\@IEEEtriggercmd}
    {\finentry}%
  }%
}
\tikzstyle{startstop} = [rectangle,  minimum width=3cm, minimum height=1cm,text centered,text width=10cm, draw=black ,fill=gray!20]
\tikzstyle{process} = [rectangle, minimum width=3cm, minimum height=1cm, text centered,text width=10cm, draw=black,fill=orange!20]
\tikzstyle{arrow} = [thick,->,>=stealth]
\tikzstyle{state}=[shape=circle,draw=blue!50,fill=blue!20]
\tikzstyle{observation}=[shape=rectangle,draw=orange!50,fill=orange!20]
\tikzstyle{lightedge}=[<-,dotted]
\tikzstyle{mainstate}=[state,thick]
\tikzstyle{mainedge}=[<-,thick]
\definecolor{bitcolor}{rgb}{1,0.84314,0}
\definecolor{checkcolor}{rgb}{0.52941,0.80784,1}
\pgfplotsset{compat=1.18}
\renewcommand{\epsilon}{\varepsilon}
\newcommand{\vnop}{\varoast}
\newcommand{\cnop}{\boxast}
\newtheorem{theorem}{Theorem}
\newtheorem{prop}[theorem]{Proposition}
\title{Quantum State Compression with Polar Codes\thanks{This research was supported in part by the National Science Foundation (NSF) under Grants 1908730, 1910571, and 2106213. Any opinions, findings, and conclusions or recommendations expressed in this material are those of the author(s) and do not necessarily reflect the views of the NSF.} }
\author[1,2]{Jack Weinberg}
\author[1,3]{Avijit Mandal}
\author[1,2,3]{Henry D. Pfister}
\affil[1]{Department of Electrical and Computer Engineering, Duke University}
\affil[2]{Department of Mathematics, Duke University}
\affil[3]{Duke Quantum Center, Duke University}
\date{}
\begin{document}

\onecolumn

\maketitle

\begin{abstract}
  In the quantum compression scheme proposed by Schumacher, Alice compresses a message that Bob decompresses.
  In that approach, there is some probability of failure and, even when successful, some distortion of the state.
  For sufficiently large blocklengths, both of these imperfections can be made arbitrarily small while achieving a compression rate that asymptotically approaches the source coding bound.
  However, direct implementation of Schumacher compression suffers from poor circuit complexity.

  In this paper, we consider a slightly different approach based on classical syndrome source coding.
  The idea is to use a linear error-correcting code and treat the message to be compressed as an error pattern.
  If the message is a correctable error (i.e., a coset leader) then Alice can use the error-correcting code to convert her message to a corresponding quantum syndrome.
  An implementation of this based on polar codes is described and simulated.
  As in classical source coding based on polar codes, Alice maps the information into the ``frozen" qubits that constitute the syndrome. To decompress, Bob utilizes a quantum version of successive cancellation coding.
  % THIS PAPER IS ELIGIBLE FOR THE STUDENT PAPER AWARD. We propose an efficient quantum compression protocol that asymptotically achieves the source coding bound. In the lossless quantum compression scheme proposed by Schumacher - in which Alice and Bob compress and decompress a message, respectively - Alice first projects her message onto a typical subspace. Bob then inverts (with respect to the typical subspace) Alice's projection using an isometry from the typical subspace to the higher-dimensional initial space. A similar procedure based on error-correcting codes involves treating the message to be compressed as an error pattern. If Alice’s message is a correctable error (i.e., a coset leader) then Alice can use a linear error-correcting code to convert her message to its corresponding syndrome. While theoretically significant, direct implementation of Schumacher compression suffers from poor circuit complexity. Therefore, we propose the following modified protocol: Alice projects her message onto the subspace of correctable errors, performs a measurement to confirm its typicality, and applies an embedded polar transform to her message. Then, she stores the ``frozen" qubits that constitute the syndrome. To decompress Alice’s state, Bob utilizes a quantum version of successive cancellation coding. 
    % Say something about the complexity of our message
\end{abstract}

\section{Introduction}
\label{Introduction}
    
    % Quantum computing
    Quantum computation is the use of quantum mechanical effects for information processing.
    % Significance of quantum computing
    At sufficient scale, quantum computers hold promise for myriad problems including integer factorization, cryptography, and the simulation of physical systems which are intractable on contemporary classical computers \cite[Chapter~1]{nielsen_chuang_2010}.
    % Motivation (need for compression protocols)
    As with classical computers, efficient methods for the compression of quantum information will be instrumental for practical quantum computation~\cite{braunstein2000quantum}.
    % Schumacher compression
    In 1995, Schumacher proposed the first method for rate-optimal lossless quantum state compression~\cite{schumacher_1995} and it can be seen a generalization of Shannon's original protocol for rate-optimal lossless classical compression.
    However, direct implementation on a quantum computer is quite complex because, for $n$ qubits, it involves rearranging the $2^n$ basis elements in a somewhat complicated fashion~\cite{cleve1996schumacher}.

  One can also use linear error-correcting codes to implement rate-optimal classical compression~\cite{ancheta_1976, cronie_korada_2010}.
    Here, we describe the associated method for rate-optimal quantum compression based on linear error-correcting codes.
    In particular, we use polar codes~\cite{arikan_2009}.
    Polar codes are known to be rate-optimal for many coding and compression problems~\cite{arikan_2009,cronie_korada_2010,wilde_2013,wilde_guha_2013} and, for many of these, they allow efficient encoding and decoding.
    Until recently, however, extensions to quantum problems did not naturally lead to efficient decoding algorithms.
    
    For the pure-state classical-quantum (CQ) channel, belief-propagation with quantum messages (BPQM) can provide optimal decoding for systems defined by tree-like factor graphs~\cite{Renes-njp17,Rengaswamy-npjqi21,piveteau2022quantum,Pfister-bits23}.
    This approach has also been extended to binary-input symmetric CQ channels and polar codes~\cite{brandsen2022belief,mandal2024polar}.
    In this paper, an efficient quantum successive cancellation decoding algorithm is defined for the proposed compression based on polar codes.
    
    % Outline
    We begin with an overview of requisite classical and quantum coding theory. Then, we describe the compression protocol and provide results describing its asymptotic performance.
    Next, we provide simulation results for our protocol with blocklength $8$ and $16$.
    Finally, we conclude with a discussion of this protocol's relationship with, and implications for, quantum compression more generally.

%\vspace{-1mm}
\section{Background}
\label{Background}
%\vspace{-1mm}

\subsection{Binary Linear Codes}
%\vspace{-1mm}
    
    Let $\mathbb{F}_2$ denote the finite field with 2 elements.
    An $[N,K]$ binary linear code $\mathcal{C}$ is a $K$-dimensional subspace of $\mathbb{F}_2^N$.
    Such a code can be defined as the row space of a full-rank generator matrix $G \in \mathbb{F}_2^{K \times N}$ or as the null space of a full-rank parity-check matrix $H \in \mathbb{F}_2^{(N-K)\times N}$.
    The dual code $\mathcal{C}^\perp$ is the orthogonal complement of $\mathcal{C}$, where orthogonality is defined with respect to the standard inner product on $\mathbb{F}_2^N$.
    Thus, a parity-check matrix for $\mathcal{C}$ is a generator matrix for $\mathcal{C}^\perp$.
    
    For an error vector $z \in \mathbb{F}_2^N$, the syndrome of $z$ is given by $s = z H^T$.
    We note that $s = 0$ if and only if $z \in \mathcal{C}$.
    Since $\mathcal{C}$ is a subgroup of the additive group of $\mathbb{F}_2^N$, it follows that $\mathbb{F}_2^N$ can be partitioned into cosets of $\mathcal{C}$ and all elements in a coset will have the same syndrome.
    In each coset, one can choose a coset leader by selecting an element of minimum Hamming weight and breaking ties arbitrarily~\cite{van_1971}.
    
    For communication over a binary symmetric channel (BSC), the transmitted codeword $x \in \mathbb{F}_2^N$ may be corrupted by an error vector $z \in \mathbb{F}_2^N$, resulting in the received bit string $y=x+z$.
    The \emph{optimal decoder} then returns the codeword closest to a codeword with ties broken arbitrarily.
    Such a scheme can also be implemented by syndrome decoding, where the syndrome $s= y H^T$ is computed first and then the error estimate $\hat{z} \in \mathbb{F}_2^n$ is computed from $s$ by selecting the coset leader of the coset associated with syndrome $s$ \cite{ancheta_1976}.

%\vspace{-1mm}

\subsection{Polar Codes}
    \label{Polar Codes}
    %\vspace{-1mm}
    Arikan's polar transformation is defined by an invertible matrix $G_N \in \mathbb{F}_2^{N \times N}$ that maps $\mathbb{F}_2^N$ to itself via $x = u G_N$~\cite{arikan_2009}.
    The vector $x$ is transmitted over a memoryless channel with capacity $C$ to give the output $y$.
    For the $i$-th input bit, one can define an effective channel whose input is $u_i$ and whose output is $(y,u_1^{i-1})$.
    As $N$ tends to infinity, the capacities of the individual effective channels become polarized and, for any $\epsilon \in (0,1/2)$, the proportion of ``good" channels (i.e., channels with capacity greater than $1-\epsilon$) converges to $C$ while the proportion of ``bad" channels (i.e., channels with capacities less than $1-\epsilon$) converges to $1-C$.
    
    When communicating via a polar code, one only sends information over the good channels.
    Thus, the $N$ bit message consists of $K$ ``information" bits which constitute a message and $N - K$ ``frozen" bits which are known by the receiver and thus carry no information.
    By freezing the bad channels, one enables the successive cancellation decoder to recover bits transmitted across good channels with high probability.
    
 The polar transformation $G_N$ on $N$ bits may be defined recursively as follows:
    \begin{itemize}
        \setlength{\itemsep}{0.5mm}        
        \item $R_N$ is the $N \times N$ permutation matrix for the reverse shuffle permutation $(1, 3, ... , N - 1, 2, 4, ... , N)$.
        \item $G_N \triangleq (I_{N / 2} \otimes G_2) R_N (I_2 \otimes G_{N / 2})$, where \[ G_2 \triangleq \begin{bmatrix}
            1 & 0 \\
            1 & 1
        \end{bmatrix}. \]
    \end{itemize}
    While originally designed for error correction on classical channels, polar codes have been adapted to many problems.
% and channels, including classical-quantum channels \cite{wilde_guha_2013}.
%\vspace{-3.5mm}

\subsection{Syndrome Source Coding}\label{ssc}
%\vspace{-1mm}

    Linear codes, such as polar codes, can also be used for lossless compression.
    Syndrome source coding is one such method for adapting linear codes to lossless compression~\cite{ancheta_1976, cronie_korada_2010}.
    % syndrome-source-coding
    In syndrome source coding, one compresses the vector $x\in \mathbb{F}_2^N$ by computing its syndrome $s= x H^T$ using the parity-check matrix $H$ of a linear code.
    This reduces the vector length because $N-K \leq N$.
    To decompress, the decoder maps the syndrome $s$ to the coset leader of its associated coset.
    
    This scheme is successful if and only if the message to be compressed is a coset leader~\cite{ancheta_1976}. Likewise, syndrome decoding of an error is successful if and only if the error is a coset leader.
    Thus, syndrome source coding is successful if and only if the vector to be compressed is an error that is correctable by the syndrome decoder of the code.
    We also note that the successive cancellation decoder for polar codes is easily transformed into a polar syndrome decoder~\cite{cronie_korada_2010} which, for the BSC, corrects the same errors as the original decoder.
%\vspace{-1mm}
    
\subsection{Quantum Formalism}
%\vspace{-1mm}

The set of natural numbers is denoted by $\mathbb{N}=\left\{ 1,2,\ldots\right\} $ and we use the shorthand $[n]\coloneqq\left\{ 1,\ldots,n\right\} $ for $n\in\mathbb{N}$.
We denote the $n$-dimensional Hilbert space by $\mathcal{H}_{n}$. We call a unit length vector $\ket{\psi}\in \mathcal{H}_{n}$ a \emph{pure state}.
An ensemble of $m$ pure states $\{p_i,\ket{\psi_i}\}$ in which $p_{i}$ is the probability of choosing the pure state $\ket{\psi_{i}}$ is represented using a positive semi-definite, unit trace matrix $\rho$ which we call a \emph{density matrix}. In other words, for the ensemble $\{p_i,\ket{\psi_i}\}$, $\rho$ can be written as 
\begin{align}
    \rho=\sum_{i \in [m]}p_{i}\ketbra{\psi_{i}}{\psi_{i}}.
\end{align}
The map $\ket{\psi}\rightarrow U\ket{\psi}$ for unitary $U\in \mathbb{C}^{n\times n}$ is called the \emph{unitary evolution} of the state $\ket{\psi}\in \mathcal{H}_{n}$. Thus, the action of $U$ on the ensemble $\{p_i,\ket{\psi_i}\}$ is described by the density matrix $\tilde{\rho}$, where 
\begin{align}
    \tilde{\rho}= \sum_{i\in [m]} p_{i}U\ketbra{\psi_{i}}{\psi_{i}}U^{\dagger}=U\rho U^{\dagger}
\end{align}
and $U^\dagger$ is the Hermitian transpose of $U$.

An $m$-outcome projective measurement on a state in $\mathcal{H}_{n}$ is implemented through a set of $m$ orthogonal projection matrices $\Pi_{j} \in \mathbb{C}^{n\times n}$ for $1\leq j\leq m$, where $\delta_{i,j}$ is the Kronecker delta function.  These projection matrices satisfy $\Pi_{i}\Pi_{j}=\delta_{i,j}\Pi_{i}$ and $\sum_{j}\Pi_{j}=I$. We denote this measurement by $\hat{\Pi}=\{\Pi_{j}\}|_{j=1}^{m}$. When we apply the measurement $\hat{\Pi}$ on the state $\rho$, the probability of outcome $j$ is $p_j=\text{Tr}(\Pi_{j}\rho)$ and the resulting post-measurement state is $\rho_j=\frac{\Pi_{j}\rho\Pi_{j}}{\text{Tr}(\Pi_{j}\rho)}$. We use the notation $\rho_{A^N}$ to denote the tensor product state of $N$ quantum states in quantum systems $A_{1},\dots A_{N}$ such that
\begin{align}
    \rho_{A^N}=\rho_{A_{1}}\otimes\dots \otimes \rho_{A_{N}}, 
\end{align}
where $\rho_{A_{i}}$ corresponds to quantum state in $A_{i}$. Similarly, we use the shorthand notation $\ket{\psi_{x^N}}_{A^N}$ to denote the tensor product of pure states in system $A_1,\dots, A_{N}$ i.e.
\begin{align}
     \ket{\psi_{x^N}}_{A^N}=\ket{\psi_{x_1}\dots \psi_{x_{N}}}_{A^N}=\ket{\psi_{x_1}}_{A_{1}}\otimes\dots\otimes\ket{\psi_{x_N}}_{A_{N}}.
\end{align}

% \HP{Define embedding of classical boolean function into unitary?}
We also define the embedding of $\mathbb{F}_2^N$ into $\mathbb{C}^{2^N}$ via the mapping $(a_1,a_2,...,a_N) \in \mathbb{F}_2^N$ to $\ket{a_1a_2...a_N}$.
Similarly, given a one-to-one boolean function $f: \mathbb{F}_2^M \to \mathbb{F}_2^N$ with $N\geq M$, we define $E(f)$ to be its embedding into the space of isometries from $A =\mathbb{C}^{2^M}$ to $B=\mathbb{C}^{2^N}$ via
\begin{equation} \label{eq:f_iso}
  E(f) \coloneqq \sum_{x^M\in \mathbb{F}_{2}^{M}} \prescript{}{B}{\ket{f(x^M)}} \bra{x^M}_A.
\end{equation}

%\vspace{-2mm}

\subsection{Quantum Compression by Schumacher}
\vspace{-0mm}

    % \JW{Elaborate on Schumacher compression, and in particular the fact that he projected onto the $\delta$-typical subspace}

    Schumacher \cite{schumacher_1995} proposed a direct generalization of Shannon's protocol for lossless compression to the quantum setting. To understand Schumacher's protocol, we begin with definitions related to classical compression.
    % Introduction
    As usual, the aim of the compression protocol is for Alice to compress a message which Bob subsequently decompresses. 

    Consider $N$-bit strings produced by a classical source with alphabet $\mathcal{X}$, alphabet size $|\mathcal{X}|=n$ and distribution $p_{X}$. The entropy $H(X)$ of such a source is defined as 
    \begin{align}
        H(X) \triangleq - \sum_{x \in \mathcal{X}} p_X(x) \log p_X(x).
    \end{align}
    % \delta-typical subspace
    The sample entropy $\overline{H}(x^N)$ of a sequence $x^N$ is defined as 
    \begin{align}
        \overline{H}(x^N) \triangleq - \frac{1}{N} \log(p_{X^N} (x^N)).
    \end{align}
    The $\delta$-typical set $T_\delta^{X^N}$ is defined as 
    \begin{align}
        T_\delta^{X^N} \triangleq \{x^N :|\overline{H}
        (x^N) - H(X)| < \delta\}.
    \end{align}
    % projector to subspace
    With these classical definitions, we are equipped to understand the quantum problem.

    Consider a \emph{quantum information source} described by the state $\rho$. We can express the state $\rho$ using the eigenvalue decomposition as %\vspace{-2mm} 
    \begin{align}\label{rho-eigen}
        \rho=\sum_{x\in \mathcal{X}}p_{X}(x)\ketbra{\psi_{x}}{\psi_{x}}
    \end{align}
    where $p_{X}(x)$ corresponds to the probability of choosing the pure state $\ket{\psi_{x}}$ and $\braket{\psi_{x}}{\psi_{x'}}=0$ if $x\neq x'$. The von Neumann of $\rho$ is
    \begin{align}
        S(\rho)=-\text{Tr}(\rho \log \rho)=-\sum_{x\in \mathcal{X}}p_{X}(x)\log (p_{X}(x)).
    \end{align}
    In other words, the state $\rho$ can be represented by the ensemble $\{p_{X}(x),\ket{\psi_x}\}$.
    $\rho$ can be decomposed in many ways using non-orthogonal states. However, the representation of state $\rho$ in terms of eigenvectors achieves von Neumann entropy which ensures maximum compressibility. Similarly, consider joint quantum state $\rho_{A^N}$ with $N$ states drawn from the ensemble  $\{p_{X}(x),\ket{\psi_x}\}$. Using the representation of $\rho$ in~\eqref{rho-eigen}, we can decompose $\rho_{A^N}$ in the following form
    \begin{align}
      \rho_{A^N}=\sum_{x^N\in \mathcal{X}^N}p_{X^N}(x^N)\ketbra{\psi_{x_1}\dots\psi_{x_N}}{\psi_{x_1}\dots\psi_{x_N}}_{A^N} 
    \end{align}
    where $p_{X^N}(x^N)=p_{X}(x_1)\dots p_{X}(x_N)$.  The classical sequences $x^N\in \mathcal{X}^N$ correspond to the indices of the quantum state in the decomposition of $\rho_{A^N}$. To define quantum typicality, we construct the typical subspace as follows 
    \begin{align}
      \Pi_{A^N}^{\rho,\delta} &\ =\text{span}\{\ket{\psi_{x^N}}: x^N\in  T_\delta^{X^N}\}\\
      &\ = \sum_{x^N\in T_\delta^{X^N}}\ketbra{\psi_{x^N}}{\psi_{x^N}}_{A^N}.
    \end{align}
    This is exactly the subspace spanned by states $\ket{\psi_{x^N}}$ whose labels $x^N\in \mathcal{X}^N$ are $\delta$-typical with respect to the distribution $p_{X^N}(x^N)$. In other words, the notion of quantum typical is similar to classical typicality but quantum typicality is considered in the eigenbasis. 
    Schumacher compression exploits this notion of quantum typicality and the typical subspace corresponding to the state $\rho_{A^N}$ to achieve the quantum compression limit i.e. the von Neumann entropy $S(\rho)$ \cite{wilde_2013}.
Consider the projective measurement defined by \[ \{  \Pi_{A^N}^{\rho,\delta}, I -  \Pi_{A^N}^{\rho,\delta} \}. \]
    If the first outcome occurs, then the quantum state is projected onto a typical subspace.
    If the second outcome occurs, it is projected onto the orthogonal complement and failure is declared.
    With the addition of a flag qubit $F$, the action of this measurement on the density matrix $\rho_{A^N}$ is given by
    \begin{align}
      \rho_{A^N} \mapsto &\ (I \!-\! \Pi_{A^N}^{\rho,\delta}) \rho_{A^N} (I \!-\! \Pi_{A^N}^{\rho,\delta}) \!\otimes\! \ketbra{0}_F + \\
      &\ \quad \Pi_{A^N}^{\rho,\delta}  \rho_{A^N} \Pi_{A^N}^{\rho,\delta} \!\otimes\! \ketbra{1}_F.
    \end{align}
    This measurement is called the typical subspace projection as it projects to the $\delta$-typical subspace of $A^N$.
    In general, this also causes a small distortion of the state because a small fraction state's mass lies outside of the typical set. More precisely, it follows from $\Tr(\Pi_{A^N}^{\rho,\delta}) < 2^N$ that $\|\rho_{A^N} - \varepsilon(\rho_{A^N})\| > 0$ if $\rho$ is non-singular, where $\varepsilon$ is the channel defined by the typical subspace projection. However, this error vanishes as $N \to \infty$ provided that $\delta > 0$ and $S(\rho) < 1$. % I think this is true but we should verify
    After the projection, Alice's task is to compress the projected state $\Pi_{A^N}^{\rho,\delta}  \rho_{A^N} \Pi_{A^N}^{\rho,\delta}$ before sending it to Bob.
    Since
    \begin{align}
        \text{Tr}\left(\Pi_{A^N}^{\rho,\delta}\right) = \left| \Pi_{A^N}^{\rho,\delta} \right| \leq 2^{N(S(\rho)+\delta)},
    \end{align}
    there is a bijective boolean function $f : T_\delta^{X^N} \xrightarrow{} \{0,1\}^{N[S(\rho) + \delta]}$ is a bijection from the classical typical sequences to the set of binary sequences of length $N(S(\rho)+\delta)$.
    Thus, Alice can use \eqref{eq:f_iso} to construct the isometry $U_f = E(f)$ mapping $A^N$ to $\mathbb{C}^{2^{N[S(\rho) + \delta]}}$.
    % \begin{align}
    %     U_f \triangleq \sum_{x^N \in T_\delta^{X^N}} \ket{f(x^N)} \bra{\psi_{x^N}},
    % \end{align}
    % where the function $f : T_\delta^{X^N} \xrightarrow{} \{0,1\}^{N[S(\rho) + \delta]}$ is a bijection from the typical classical sequences to the set of binary sequences of length $N(S(\rho)+\delta)$.
    Observe that this isometry is exactly the map from the set of message strings to the set of typical strings used in Shannon's protocol.
    To decompress Alice's state, Bob applies $U_f^{\dagger}$ on the received state. This strategy approaches the fundamental limiting compression rate of $S(\rho)$ . 
    % The least achievable lossless compression rate $R$ given by 
    % \begin{equation*}
    %     \inf \{R : R \text{ a achievable}\} = - \Tr(\rho \ln \rho),
    % \end{equation*}
    % which is von Neumann entropy of $\rho$ \cite{wilde_2013}.
    % Alice applies the implied unitary map to the state $\rho^{\otimes n}$ and performs a measurement on the flag qubit $F$ described by measurement operators $\mathcal{M} = \{\ketbra{0}, \ketbra{1}\}$ corresponding to outcomes $0$ and $1$, respectively. Outcome $0$ indicates a failure to encode. If outcome $1$ occurs,

\subsection{Quantum Compression via Syndrome Source Coding}

% \HP{Describe generic syndrome source coding approach}
Linear codes may also be adapted to implement lossless quantum state compression.
We describe this adaptation here in a generic way before later addressing the particular case of polar codes, which is the main subject of this paper. As described in section \ref{ssc}, any linear code can be used to implement lossless classical compression via syndrome source coding. Let $\mathcal{C}$ be an $[N,K]$ binary linear code with full-rank parity-check matrix $H$.
Let $H'$ be an invertible extension of $H$ that is formed by adding $K$ rows to the bottom of $H$ so that $H'$ is full rank.

Suppose Alice wants to compress a state drawn a quantum information source described by $\rho^{\otimes N}$, where $\rho$ describes a single qubit state with spectral decomposition
\begin{align}
    \rho=(1-p)\ketbra{\psi_{0}}{\psi_0}+p\ketbra{\psi_1}{\psi_1}.
\end{align}
Then, Alice and Bob may compress and decompress $\rho^{\otimes N}$ by embedding the syndrome-source coding protocol for $\mathcal{C}$ into the quantum domain as follows:
\begin{enumerate}
    \setlength{\itemsep}{0.5mm}
    \item Alice applies $U_\rho^{\otimes N}$ to her state, where $U_\rho$ is the unitary that diagonalizes $\rho$, to get
    \begin{align}
    \overline{\rho} = U_{\rho}\rho U_{\rho}^{\dagger}=(1-p)\ketbra{0}{0}+p\ketbra{1}{1}.
    \end{align} 
    \item Alice applies the quantum instrument map $\varepsilon$ defined by $\overline{\rho} \mapsto (\Pi^N_K \overline{\rho} \Pi^N_K) \otimes \ket{1} \bra{1}_B + ((I_N - \Pi^N_K) \overline{\rho} (I_N - \Pi^N_K)) \otimes \ket{0} \bra{0}_B$ to her state, where $\Pi^N_K \triangleq \sum_{x^N \in T} \ket{x^N} \bra{x^N}$ and $T$ is the set of computational basis states indexed by coset leaders of $\mathcal{C}$ (i.e., the set of errors correctable by syndrome decoding).
    The state of the resulting system is given by $\Tilde{\rho}^{\otimes N}$.
    \item Alice measures system $B$ with the projective measurement $\{\ketbra{0}{0}, \ketbra{1}{1}\}$.
    If the outcome is $\ketbra{0}$, failure is declared.  Otherwise, the outcome is $\ketbra{1}$ and she applies the unitary embedding of $H'$, represented by $U_{H'} = E(H' x)$, to $\tilde{\rho}$. We use the extended matrix $H'$ because the matrix $H$ is typically not invertible and thus cannot be embedded into a unitary.
    Then, Alice sends the first $N-K$ qubits to Bob.
    From a density matrix point of view, this is equivalent to computing the partial trace over the system $\mathcal{I}$ which consists of the last $K$ qubits. Thus, Bob receives the state $\Psi = \Tr_\mathcal{I}(U_{H'} \rho U_{H'}^\dagger)$, which is the mapping into syndrome space of the projection of $\rho$ onto the correctable errors in $T$.
    \item To decompress $\Psi$, Bob implements syndrome decoding as an isometry.
    Consider the isometry mapping $N-K$ qubits to $N$ qubits, that is defined by \[U_D = \sum_{x^N\in T} \prescript{}{A^N}{\ket{x^N}}\bra{H x^N}. \]
    Applying $U_D$ to $\Psi$ gives $\tilde{\rho}$ because the image of $\Pi_K^N$ is supported on $T$ by construction and thus $U_D$ inverts the both the partial trace and the syndrome mapping.
    Finally, Bob applies ${U_\rho^\dagger}^{\otimes N}$ to invert the diagonalization operation applied by Alice. This gives the desired approximation of Alice's initial state.
\end{enumerate}

\section{Protocol}
\label{Protocol}
% \JW{elaborate on encoding}
Suppose Alice wants to store an $N$-qubit tensor product state using as few qubits as possible; she intends to give her stored state to Bob who will then decompress it to recover Alice's original state. Alice's message may be understood as being drawn from a quantum information source described by the state $\rho^{\otimes N}$ where $\rho$ describes a single qubit state. Suppose that $\rho$ has a spectral decomposition
\begin{align}
    \rho=(1-p)\ketbra{\psi_{0}}{\psi_0}+p\ketbra{\psi_1}{\psi_1},
\end{align}
where $\psi_{0},\psi_1\in \mathcal{H}_{2}$ are eigenvectors of state $\rho$. Subsequently, we refer to the probabilities $\{p,1-p\}$ as probabilities of source qubits for the state $\rho$. We also assume that Alice has access to the unitary $U_{\rho}\in \mathbb{C}^{2\times 2}$ such that diagonalizes $\rho$.
This assumption is valid because Alice knows the state $\rho$ and we assume that Bob also knows the unitary $U_{\rho}$, which he uses while recovering the compressed state.

Below, we describe how polar codes can be used to implement lossless quantum compression, which we call Schumacher compression in a generic sense. Since $U_{\rho}$ corresponds to a qubit unitary, both Alice and Bob can inexpensively apply it to their states to realize our compression protocol.

    We propose that Alice encode her state by embedding the syndrome source coding procedure into the group of unitary operators on her state space.
    Following the discussion in section~\ref{Polar Codes}, we can embed the binary polar transform $G_N$ into a unitary transform $V_N$ on $N$ qubits (c.f., \cite{wilde2012quantum}).
    While $G_N$ is designed to act on binary vectors via right multiplication (i.e., $u \mapsto u G_N$), we define $V_N$ act on qubits via left multiplication.
    Thus, we have $V_N = E(G_N^T)$ and this gives
    \begin{itemize}
        \setlength{\itemsep}{0.5mm}
        \item $V_2 \triangleq E(G_2^T (x_1,x_2)^T) = \textrm{CNOT}_{2\to 1}$.
        \item $U^{R}_N \triangleq E(R_N^T x^N) = E(R_N x^N)$ is a SWAP operator on qubits defined by the permutation $(1, 3, ..., N - 1, 2, 4, ..., N)$.
        \item $V_N \triangleq E(G_N^T x^N) = E( (I_{N / 2} \otimes G_2^T) R_N (I_2 \otimes G_{N / 2}^T) x^N)$ which implies $V_N \triangleq (\mathbb{I}_{N/2} \otimes V_{N / 2}) \: U^R_{N} \: (\mathbb{I}_{N-2} \otimes V_2)$.
    \end{itemize}
    where $\mathbb{I}_M$ denotes the identity operator on $M$ qubits.
    % \HP{We can't use the same $I_N$ for two different things.  This works but we could also use the funky $\mathbb{I}$ for the classical case.}
    We propose the following compression protocol:
    \begin{enumerate}
        \setlength{\itemsep}{0.5mm}
        \item Alice and Bob design an $N$-bit classical polar code for the BSC with error probability $p$, or  BSC$(p)$. They agree on the set $\mathcal{I}$ of indices for the $K$ information qubits so that $\mathcal{I}^c$ contains the indices of the $N-K$ frozen bits.
        Let $f\colon \mathbb{F}_2^{N-K} \to \mathbb{F}_2^N$ be the boolean function defined by polar syndrome decoding that maps syndromes to error patterns.
        Let $T$ be the range of $f$ (i.e., the set of correctable error patterns for polar syndrome decoding).
        
        \item Alice applies a unitary $U_{\rho}^{\otimes N}$ to $\rho^{\otimes N}$ where $U_\rho$ is a change of basis operator from the eigenbasis of $\rho$ to the computational basis: she obtains the state $\overline{\rho} = U_\rho^{\otimes N} \rho^{\otimes N} {U_\rho^{\otimes N}}^\dagger$.
        \item Alice encodes the message:
          %\AM{should not be the orders for (a) and (b) reversed? i.e. first we do encoding using polar transform then projecting to typical subspace?}
        \begin{enumerate}[(a)]
      
            \item Alice must apply the quantum instrument map $\varepsilon$ defined by $\overline{\rho} \mapsto (\Pi^N_K \overline{\rho} \Pi^N_K) \otimes \ket{1} \bra{1}_B + ((I_N - \Pi^N_K) \overline{\rho} (I_N - \Pi^N_K)) \otimes \ket{0} \bra{0}_B$ to her state, where $\Pi^N_K \triangleq \sum_{x^N \in T} \ket{x^N} \bra{x^N}$ is the projection onto the set of correctable error patterns, obtaining $\Tilde{\rho}^{\otimes N}$.
            However, directly applying this isometry is intractable for large block lengths. Thus, Alice instead makes use of the efficient quantum successive cancellation decoding algorithm described in Section~\ref{sec:eff_imp} to apply this projection.
            \item Alice measures system $B$ with the projective measurement $\{\ketbra{0}{0}, \ketbra{1}{1}\}$.
            If the outcome is $\ketbra{0}$, failure is declared.  Otherwise, the outcome is $\ketbra{1}$ and she applies $V_N$ to system $A$.
            \item Alice sends the frozen qubits with indices in $\mathcal{I}^c$ to Bob.
            Thus, Bob receives the quantum state $\Tr_{\mathcal{I}} (\Psi)$.
        \end{enumerate}
        \item To decode, Bob must apply the isometry $U_D = E(f)$ to the received qubits.
        However, directly applying this isometry is intractable for large block lengths. Thus, Bob instead makes use of the efficient quantum successive cancellation decoding algorithm described in Section~\ref{sec:eff_imp} to decompress Alice's state.  %\AM{Similarly, Alice can use belief propagation to determine typical subspace and apply isometry in an efficient way.} 
        %Then, he applies $U_N$, thus recovering $\Tilde{\rho}^{\otimes N}$. 
        Lastly, Bob applies ${U_\rho^\dagger}^{\otimes N}$ to invert the diagonalization operation Alice applied, thus obtaining an approximation of Alice's initial state.
    \end{enumerate}
    \begin{figure*}
    \begin{center}
    \resizebox{\textwidth}{!}{%
    \begin{tikzpicture}
    \node[]{
    \begin{quantikz}[]
    \lstick[1]{Flag Qubit} & \lstick{$F$} &\qw& \qw & \gate[5]{\varepsilon} & \meter{}  \\
    \lstick[4]{Initial State} & \lstick{$\rho$} &\qw& \gate[1]{U_\rho} \qw& \qw & \qw & \ctrl{1} & \qw & \ctrl{1} & \trash{\text{trace}} \\
    & \lstick{$\rho$} &\qw& \gate[1]{U_\rho} & \qw & \qw& \targ{} & \swap{1} & \targ{} & \trash{\text{trace}} \\
    & \lstick{$\rho$} &\qw& \gate[1]{U_\rho} & \qw &\qw & \ctrl{1} & \swap{-1} & \ctrl{1} & \qw \slice{Transmission} &\qw & \qw & \gate[4]{\mathcal{D}} & \gate[1]{U_\rho^\dagger} &\qw& \rstick[4]{Recovered State} \\
    & \lstick{$\rho$} &\qw& \gate[1]{U_\rho} & \qw & \qw & \targ{1} & \qw & \targ{} & \qw &\qw& \qw & & \gate[1]{U_\rho^\dagger} &\qw& \\
    &&&&&&&&&& & \qw &\qw& \gate[1]{U_\rho^\dagger} &\qw& \\
    &&&&&&&&&& & \qw &\qw& \gate[1]{U_\rho^\dagger} &\qw& 
    \end{quantikz}
    };
    \end{tikzpicture}
    }
    \end{center}
    \caption{Quantum compression with embedded polar codes for blocklength 4.}
    \end{figure*}
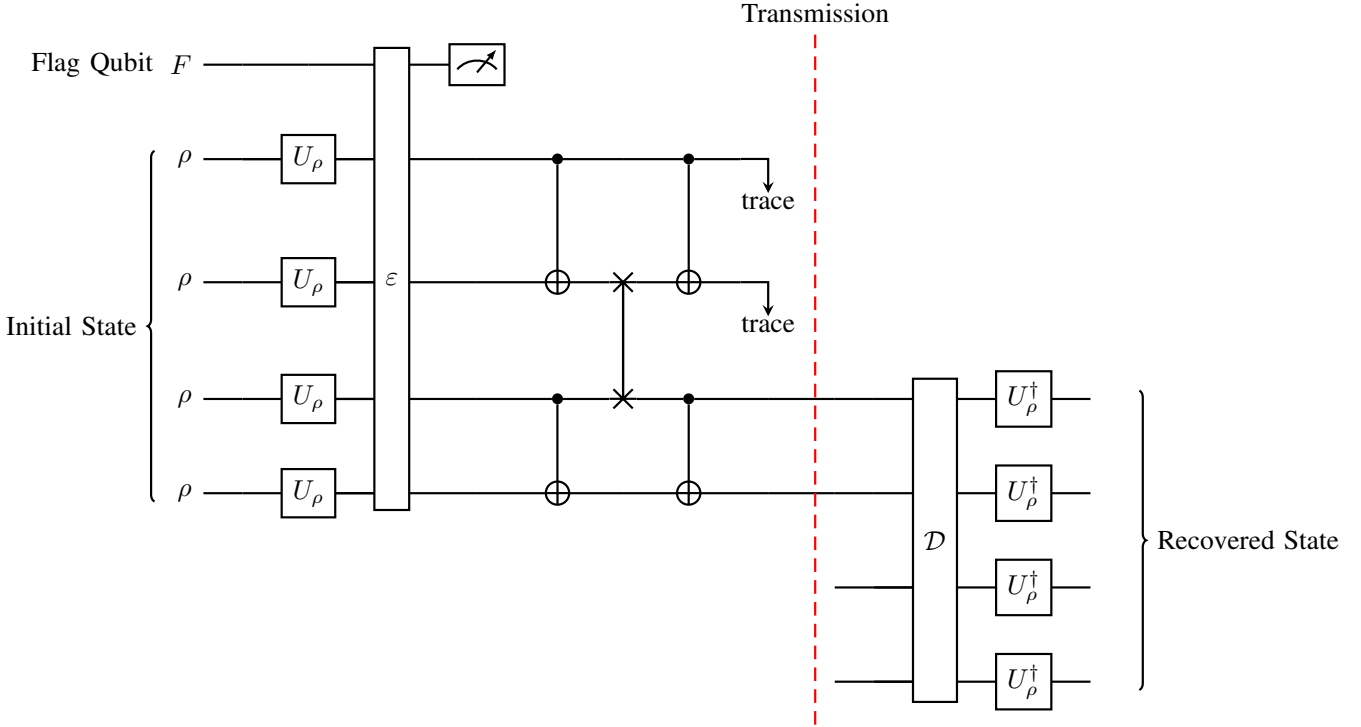
    Whereas Schumacher compression, as it was initially proposed, involves projecting Alice's state onto the subspace of $\delta$-typical strings, in the new protocol, Alice projects onto the subspace of errors which are correctable with respect to a polar code. In the next section, we will discuss how Bob can utilize efficient quantum successive cancellation decoding to realize the required isometry in practice.
\section{Efficient Implementation}
\label{sec:eff_imp}
Belief propagation (BP) is the name used for a class of message-passing algorithms that provide low-complexity decoding for the codes represented by factor graphs~\cite{Kschischang-it01}.
Such codes include low-density parity check (LDPC) \cite{gallager1962low} and polar \cite{arikan_2009} codes.
While some versions appeared earlier, BP was named by Pearl in 1982 \cite{pearl2022reverend} and was later shown to be efficient for decoding codes \cite{mceliece1998turbo,kschischang1998iterative,davey1998low}. Recently, BP has been generalized via belief propagation with quantum messages (BPQM) for pure-state CQ channels  \cite{piveteau2022quantum,Pfister-bits23,Renes-njp17,Rengaswamy-npjqi21} and subsequently for general binary symmetric CQ (BSCQ) channels \cite{brandsen2022belief,mandal2023belief,mandal2024polar}.

In this paper, we exploit the factor graph structure of the polar code to recover the compressed quantum state using a BPQM-type algorithm.  In our method, the decoder receives frozen qubits constituting the compressed quantum state which are used to recover information qubits via successive cancellation decoding of a polar code \cite{arikan_2009}.

In classical BP for polar code, while decoding a noisy codeword the decoder utilizes channel error probabilities as inputs. It combines these messages across the factor graph, employing successive cancellation decoding to determine the error pattern. Similarly, BP can be used for compression where we recover the complete binary sequence/codeword from the frozen bits. In this scenario, the frozen bits are used as a syndrome i.e. a compressed sequence and instead of channel error probabilities, the decoder uses source probabilities as input. The syndrome-based algorithm can be lifted to the quantum domain by considering all possible messages sent back from frozen qubits in superposition when frozen qubits are used as a syndrome. This allows for the combining of messages classically while accounting for the indeterminate nature of quantum information. This observation plays a key role in our lifted BP algorithm. While the classical messages, when passed through the factor graph, become conditional on nature depending on the number of frozen qubits and are written as a list of probabilities, the quantum part of the algorithm involves designing appropriate unitary based on these conditional classical messages.

Initially, the information qubits are prepared as ancilla qubits in state $\ket{0}$. To realize the successive cancellation for decoding bit $u_{i}$, we construct a factor graph with root node $u_{i}$ and decoder output for past bits as $\hat{u}_{1}^{i-1}$ from the decoding factor graph of the polar code. The factor graph takes message probabilities associated with the distribution of the source qubits after diagonalization as inputs.
The messages are passed through the factor graph through check-node ($\cnop$) and bit-node ($\vnop$) combining rules. In the probability domain, the check-node and bit-node combining rules of messages $p_1$ and $p_2$ are realized as  \vspace{-1.5mm}
\begin{align}
    p_1\cnop p_2 &\ = p_1(1-p_2)+p_2(1-p_1)\label{cnop_msg}\\
    p_1\vnop p_2 &\ =\frac{p_1p_2}{p_1p_2+(1-p_1)(1-p_2)}\label{vnop_msg}.
\end{align}
Since the frozen qubits are in a superposition of $\ket{0}$, $\ket{1}$ qubit states, while decoding the root nodes corresponding to the frozen qubits, a conditional message is sent back to continue the successive cancellation decoding. To decode root nodes associated with information qubits, we construct an appropriate unitary based on conditional messages sent back from the frozen qubits and apply the conditional unitary to decide whether or not to flip the information qubit (which is set as an ancilla in state $\ket{0}$). The decoding complexity grows as the number of frozen qubits increases.
\subsection{Decoding a length-4 polar code}
Before describing the general algorithm to recover the quantum state for the length-$N$ polar code, we will describe the decoding associated with the length-4 polar code to introduce the readers to the flavour of the BP for the quantum state compression problem using polar codes. In this toy example, we consider the index of frozen qubits to be $u_{F}=\{1,3\}$ and information qubits to be $u_{I}=\{2,4\}$. The conditional messages are described using two lists $\{p_{vec}, L_{F}\}$ where $L_{F}$ is a length-$l$ sorted list of frozen qubits the message is conditioned on and $p_{vec}$ is the length-$2^l$ list of message probabilities. To denote the $i$\textsuperscript{th} element of the list $L$ we use the notation $L(i)$.
Fig.~\ref{polar4dec1} and Fig.~\ref{polar4dec3} shown in the appendix depict factor graphs along with belief propagation for a length-4 polar code.

The decoder takes message probabilities for the codeword bits $x_{1}^{4}$ as the input. These message probabilities correspond to the distribution of source qubits after diagonalization; the probabilities of $0$ and $1$ are $1-p$ and $p$, respectively. At the beginning of the message passing procedure, the messages do not depend on the frozen qubits, hence each codebit $x_{i}$ contains message $\{\{p\},\{\}\}$. The messages are passed through the factor graph and combined according to check-node and bit-node combining rules. After check-node combining, the nodes $v_1$ and $v_3$ contain messages $\{\{p\cnop p\},\{\}\}$. Then we apply check-node combining again to compute the message for $u_1$ which is $\{\{p_1\cnop p_1\},\{\}\}$ where $p_1=p\cnop p$. Since the first qubit is frozen, it sends back the conditional message $\{\{0,1\},\{1\}\}$ which is combined with the messages in nodes $v_1$ and $v_3$ to continue the successive cancellation decoding. The message from the frozen qubit $u_1$ is combined with the message of $v_1$ to produce the message $\{\{0\cnop p_1,1\cnop p_1\},\{1\}\}$=$\{\{p_1,1-p_1\},\{1\}\}$. This message is combined via a bit-node with the message of node $v_3$ to generate the message to the bit $u_3$ which can be computed as $\{\{p_1\vnop p_1,(1-p_1)\vnop p_1\},\{1\}\}$. This message is used to decide whether or not to flip the information qubit conditioned on the first qubit.
The classical message passing applies hard decisions on the received message and estimates the information bit. Similarly, in our lifted algorithm, we apply the hard decision to get an associated conditional message $\{\{h_0,h_1\},\{1\}\}$ where $h_0=\text{hard-dec}(p_1\vnop p_1)$ and $h_1=\text{hard-dec}((1-p_1)\vnop p_1)$. 
%\AM{should we mention deterministic tie-breaking here while defining hard decisions or just ignore it?}
%\HP{Just ignore}
% I think we need to define hard-dec
To conditionally flip the information qubit we apply a unitary of the form
\begin{align}
    U_{2,\{1\}}(h_0,h_1)=\ketbra{0}{0}\otimes \sigma_{x}^{h_0}+\ketbra{1}{1}\otimes \sigma_{x}^{h_1}.
\end{align}
on the first and second qubits.
This unitary generalizes the successive cancellation decoding as it coherently flips the information qubit depending on the state of the frozen qubit. In the next stage of the decoding procedure, the conditional message $\{\{h_0,h_1\},\{1\}\}$ associated with the hard decision is sent back from the information qubit corresponding to $u_2$ and passed through the factor graph associated with root-node $u_3$ (see Fig.~\ref{polar4dec3}). Since the qubit corresponding to $u_3$ is a frozen qubit it sends back $\{\{0,1\},\{3\}\}$. Let the conditional messages in node $v_2$ and $v_4$ after message passing at this stage be $\{\{p_{v_2},p_{v_2}'\},\{1\}\}$  and  $\{\{p_{v_4},p_{v_4}'\},\{1\}\}$, respectively. To decode the information qubit for $u_4$, the messages $\{\{0,1\},\{3\}\}$ and $\{\{p_{v_2},p_{v_2}'\},\{1\}\}$ are combined via a check-node to produce conditional message $\{\{p_{v_2},1-p_{v_2},p_{v_2}',1-p_{v_2}'\},\{1,3\}\}$. This message is combined via a bit-node with the message  $\{\{p_{v_4},p_{v_4}'\},\{1\}\}$ to produce $\{\{p_{v_2}\vnop p_{v_4},(1-p_{v_2})\vnop p_{v_4},p_{v_2}'\vnop p_{v_4}',(1-p_{v_2}')\vnop p_{v_4}'\},\{1,3\}\}$. Let the hard decision of the above message be $\{\{h_0',h_1',h_2',h_3'\},\{1,3\}\}$. The decoder then applies a unitary given by
\begin{align}
     U_{4,\{1,3\}}(h_0',h_1',h_2',h_3'&) 
      =\ \ketbra{00}{00}\otimes \sigma_{x}^{h_0'}+\ketbra{01}{01}\otimes \sigma_{x}^{h_1'} \\
     &\ +\ketbra{10}{10}\otimes \sigma_{x}^{h_2'}+\ketbra{11}{11}\otimes \sigma_{x}^{h_3'}.
     \vspace{-3mm}
\end{align}
on the information qubit for $u_4$ and previous frozen qubits.
The unitary conditionally flips the information qubits based on the state of the frozen qubits. This concludes the description of our BP decoding algorithm for length-4 polar code.

A similar algorithm also can be used to realize the projection in the encoding map that Alice uses to compress her state to frozen qubits.
In this case, we use the same factor graphs, but instead of flipping, we apply conditional projectors on the information qubits. The conditional projectors use hard decision messages and depending on the hard decision message Alice determines whether the state belongs to the typical subspace or not. The conditional projector for the information qubit at index 4 conditioned on frozen qubits $\{1,3\}$ with hard-decision message $\{\{h_0',h_1',h_2',h_3'\},\{1,3\}\}$, is of the form 
\begin{align}
    \Pi_{4,\{1,3\}}& (h_0',h_1',h_2',h_3')=  \ketbra{00}{00}\otimes \Pi_0^{1-h_0'}\Pi_{1}^{h_0'} \\
    &+\ketbra{01}{01}\otimes \Pi_0^{1-h_1'}\Pi_{1}^{h_1'} 
    +\ketbra{10}{10}\otimes \Pi_0^{1-h_2'}\Pi_{1}^{h_2'}
    \\ & +\ketbra{11}{11}\otimes \Pi_0^{1-h_3'}\Pi_{1}^{h_3'}.
\end{align}
where $\Pi_{0}=\ketbra{0}{0}$ and $\Pi_1=\ketbra{1}{1}$ respectively.
After determining whether the state belongs to the typical subspace or not, Alice traces out the information qubits and sends the state to Bob. Since we are dealing with computational bases and projectors commuting with each other, Alice can compress the state optimally.
The description of BP for length-4 polar code builds up the intuition about the general recursive algorithm for the length-$N$ which we describe in Appendix~\ref{BP decoding}.
%     \begin{figure}
%     \begin{center}
% \resizebox{\textwidth}{!}{%
%     \begin{tikzpicture}
%     \node[]{
%     \begin{quantikz}[]
%     \lstick[4]{Initial State} && \lstick{$\rho$} &\qw& \gate[4]{\Pi^n_m} & \ctrl{1} & \qw & \ctrl{1} & \trash{\text{trace}} \\
%     && \lstick{$\rho$} &\qw& & \targ{} & \swap{1} & \targ{} & \trash{\text{trace}} \\
%     && \lstick{$\rho$} &\qw& & \ctrl{1} & \swap{-1} & \ctrl{1} & \qw \slice{Transmission} &\qw & \qw & \gate[4]{\mathcal{D}} & \qw&\qw& \rstick[4]{Recovered State} \\
%     && \lstick{$\rho$} &\qw& & \targ{1} & \qw & \targ{} & \qw &\qw& \qw & & \qw&\qw& \\
%     &&&&&&&&& & \qw &\qw& \qw&\qw& \\
%     &&&&&&&&& & \qw &\qw& \qw&\qw& 
%     \end{quantikz}
%     };
%     \end{tikzpicture}
%     }
%     \end{center}
%     \caption{Quantum compression with embedded polar codes for blocklength 4.}
%     \end{figure}

\section{Main Results}
\label{Main Results}
    \begin{prop} \label{prop:error}
        The proposed protocol is successful if and only if Alice measures the outcome associated with projector $\Pi^N_K$, which occurs with probability $\Tr(\Pi^N_K \rho^{\otimes N})$.
    \end{prop}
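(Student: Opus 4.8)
The plan is to split the claim into its two components --- the logical equivalence ``the protocol succeeds if and only if Alice obtains the measurement outcome associated with $\Pi^N_K$'' and the value of that outcome's probability --- and to prove the equivalence by establishing the two implications, one of them contrapositively. Throughout, read ``successful'' as ``failure is not declared and Bob's output is the intended decompression'', where the intended decompression is the faithful inverse of all of Alice's operations and differs from $\rho^{\otimes N}$ only by the unavoidable typical-subspace distortion, a separate matter that vanishes as $N\to\infty$. One implication is then immediate from the protocol description: if the flag measurement of Step~3(b) returns $\ketbra{0}{0}$, i.e.\ the outcome associated with $I_N-\Pi^N_K$, Alice declares failure by construction, so the protocol is not successful; contrapositively, a successful run must have produced the outcome $\ketbra{1}{1}$, which --- through the instrument $\varepsilon$, whose $\ket{1}$ flag is correlated with the $\Pi^N_K$-projected branch --- is precisely the outcome associated with $\Pi^N_K$.

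For the converse I would, conditioned on the favorable outcome, propagate the state through the remaining operations and check that Bob recovers the intended state. After $\varepsilon$ and the $\ketbra{1}{1}$ outcome, system $A^N$ holds $\sigma \coloneqq \Pi^N_K\,\overline{\rho}^{\otimes N}\,\Pi^N_K / \Tr(\Pi^N_K\overline{\rho}^{\otimes N})$, which is supported on $\operatorname{span}\{\ket{x^N}:x^N\in T\}$ and --- the point that makes everything go through --- is diagonal in the computational basis, because $\overline{\rho}^{\otimes N}$ is. The remaining maps are embeddings of classical operations: $V_N=E(G_N^T)$ merely permutes computational basis vectors, $\Tr_{\mathcal I}$ acts on a basis-diagonal state as marginalization over the information coordinates, and $U_D=E(f)$ is the embedding of the decoding map $f$. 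Hence the composition $U_D\circ\Tr_{\mathcal I}\circ(V_N\,\cdot\,V_N^\dagger)$ transforms $\sigma=\sum_{x^N\in T}\bra{x^N}\sigma\ket{x^N}\,\ketbra{x^N}{x^N}$ exactly as the underlying classical syndrome--source code transforms the probability vector $\big(\bra{x^N}\sigma\ket{x^N}\big)_{x^N\in T}$, and that classical round trip is the identity on $T$: applying the polar transform and reading off the frozen coordinates $\mathcal I^c$ returns the polar syndrome of $x^N$, injective on $T$ since distinct coset leaders have distinct syndromes (the fact, cited from \cite{cronie_korada_2010}, that polar successive-cancellation decoding becomes a polar syndrome decoder with the same correctable set $T$), while $f$ sends each syndrome back to its coset leader. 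Therefore $U_D\,\Tr_{\mathcal I}(V_N\sigma V_N^\dagger)\,U_D^\dagger=\sigma$, and the final ${U_\rho^\dagger}^{\otimes N}$ inverts the change of basis of Step~2, so Bob recovers ${U_\rho^\dagger}^{\otimes N}\sigma\,U_\rho^{\otimes N}$ and the protocol is successful.

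It remains to evaluate the probability. Conditioning on the flag is a projective measurement on system $B$, so the outcome associated with $\Pi^N_K$ occurs with probability $\Tr\!\big(\Pi^N_K\,\overline{\rho}^{\otimes N}\,\Pi^N_K\big)$, which by cyclicity of the trace and $(\Pi^N_K)^2=\Pi^N_K$ equals $\Tr(\Pi^N_K\,\overline{\rho}^{\otimes N})=\sum_{x^N\in T}p_{X^N}(x^N)$, i.e.\ the probability that a $\mathrm{BSC}(p)$ error pattern lies in $T$ (is a coset leader); this is the quantity written $\Tr(\Pi^N_K\rho^{\otimes N})$ in the statement, with $\Pi^N_K$ understood in the eigenbasis of $\rho$ (equivalently, before the change of basis of Step~2). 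I expect the only genuine work to be in the converse, and specifically in the assertion that tracing out the information qubits loses nothing: a partial trace is generically irreversible, and here it is the conjunction of two facts --- the state is classical in the computational basis at that stage, and $T$ is in bijection with the syndrome space --- that makes it reversible; spelling out the identification of ``$V_N$ followed by restriction to $\mathcal I^c$'' with the classical syndrome map, rather than invoking it wholesale, is the remaining point requiring care.
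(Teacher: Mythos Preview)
Your proof is correct and, at its core, follows the same logic the paper invokes: the protocol description together with the measurement postulate. The paper's own proof is a single sentence (``This follows from the problem setup, the description in Section~\ref{Protocol}, and the measurement postulate''), so you have supplied far more than the authors deemed necessary. In particular, your explicit verification of the converse --- tracing the post-measurement state through $V_N$, the partial trace over $\mathcal{I}$, and $U_D=E(f)$, and arguing that this composite acts as the identity on states supported on $T$ because the underlying classical syndrome round trip is the identity on coset leaders --- is exactly the content the paper sweeps under ``the description in Section~\ref{Protocol}''. Your observation that the reversibility of the partial trace hinges on the state being computational-basis diagonal and on the injectivity of the syndrome map on $T$ is the genuine point, and it is correct; the paper simply does not spell it out here.
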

    \begin{proof}
        This follows from the problem setup, the description in Section~\ref{Protocol}, and the measurement postulate.%  the fact that $\Pi^n_m$ is by construction a projection.
    \end{proof}
    
    \begin{prop}\label{asymptotically successful}
    For any $\delta>0$ and $\nu \in \mathbb{N}$, let $N = 2^\nu$ and consider the compression protocol described in Section~\ref{Protocol} for a qubit source with entropy $H = - \Tr(\rho \ln \rho)$ and $K= \lfloor( 1-H-\delta) N \rfloor$.
    Then, the failure probability $P_\nu = 1 - \Tr(\Pi^N_{K} \rho^{\otimes N})$ satisfies $\lim_{n \to \infty} P_\nu = 0$.
    \end{prop}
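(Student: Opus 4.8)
The plan is to strip away the quantum dressing, reduce the statement to the block-error probability of classical polar syndrome source coding, and then invoke Ar\i kan's polarization theorem. First I would note that, because the protocol applies the diagonalizing unitary $U_\rho^{\otimes N}$ before the projector $\Pi^N_K$, the success probability in Proposition~\ref{prop:error} equals $\Tr(\Pi^N_K \overline{\rho}^{\,\otimes N})$ with $\overline{\rho}=U_\rho\rho U_\rho^\dagger=\mathrm{diag}(1-p,p)$; equivalently, we may assume from the outset that $\rho$ is diagonal. Then
\begin{align}
  \overline{\rho}^{\,\otimes N}=\sum_{z^N\in\mathbb{F}_2^N} p_{Z^N}(z^N)\,\ketbra{z^N}{z^N},
\end{align}
where $Z_1,\dots,Z_N$ are i.i.d.\ $\mathrm{Bernoulli}(p)$, and since $\Pi^N_K=\sum_{z^N\in T}\ketbra{z^N}{z^N}$ is diagonal in the same basis,
\begin{align}
  \Tr(\Pi^N_K \overline{\rho}^{\,\otimes N})=\sum_{z^N\in T} p_{Z^N}(z^N)=\Pr[Z^N\in T].
\end{align}
Hence $P_\nu=\Pr[Z^N\notin T]$, a purely classical quantity.

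The second step is to identify this probability with a classical decoding failure. By construction (Section~\ref{Protocol}), $T$ is the range of the polar syndrome-decoding map for the length-$N$ code built for $\mathrm{BSC}(p)$, with the information set $\mathcal{I}$ taken, as usual, to be the $K$ synthetic bit-channels of smallest Bhattacharyya parameter. By the equivalence recalled in Section~\ref{ssc}, $T$ is exactly the set of errors corrected by the ordinary successive-cancellation decoder on $\mathrm{BSC}(p)$, so $P_\nu=\Pr[Z^N\notin T]$ is precisely the SC block-error probability $P_e(N)$ of that polar code (equivalently, the error probability of the Cronie--Korada syndrome source code \cite{cronie_korada_2010} applied to the $\mathrm{Bernoulli}(p)$ source).

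The third step is the polarization theorem. Since $K=\lfloor(1-H-\delta)N\rfloor$ and the capacity of $\mathrm{BSC}(p)$ equals $1-H$, the design rate $K/N$ stays at least $\delta$ below capacity for every $\nu$. Ar\i kan's theorem \cite{arikan_2009} then gives $P_e(N)\le\sum_{i\in\mathcal{I}} Z(W_N^{(i)})\to 0$ as $N=2^\nu\to\infty$ (and the Ar\i kan--Telatar refinement even yields $P_\nu=O(2^{-N^{\beta}})$ for every $\beta<1/2$). Combined with the first two steps, this gives $\lim_{\nu\to\infty}P_\nu=0$; the degenerate case $H+\delta\ge1$, where $K=0$, $T=\mathbb{F}_2^N$, and $P_\nu=0$, is immediate.

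I expect the work to be bookkeeping rather than anything deep. The delicate points are making the reduction in the first step airtight---correctly threading the diagonalization and the flag-qubit measurement of Proposition~\ref{prop:error} through the computational-basis projector $\Pi^N_K$---and confirming that ``the polar code designed for $\mathrm{BSC}(p)$'' in the protocol refers to the standard choice of $\mathcal{I}$, so that the correctable set $T$ is the one whose complement has probability governed by $\sum_{i\in\mathcal{I}} Z(W_N^{(i)})$. Once those identifications are in place, the conclusion is an immediate consequence of classical channel polarization, with the $\delta$ gap between rate and capacity being precisely what the theorem requires.
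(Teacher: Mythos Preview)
Your proposal is correct and follows essentially the same route as the paper: both reduce $\Tr(\Pi^N_K\rho^{\otimes N})$ to the classical probability $\sum_{z^N\in T}p_{Z^N}(z^N)$ via the spectral decomposition, identify $P_\nu$ with the block-error probability of polar syndrome decoding on $\mathrm{BSC}(p)$, and then invoke Ar\i kan's theorem using the rate gap $K/N\le 1-h(p)-\delta$. Your treatment is slightly more explicit about the diagonalization step and adds the Ar\i kan--Telatar rate and the degenerate case, but these are embellishments on the same argument.
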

    \begin{proof}

        % Quantum part of proof
    Let $\rho^{\otimes N}$ have a spectral decomposition $\rho^{\otimes N} = \sum_{x^N \in \mathbb{F}_2^N} p_X(x^N) \ket{x^N} \bra{x^N}$. 
    For the projector $\Pi^N_K$ defined in Section~\ref{Protocol}, we have 
    \begin{align}
            \Pi^N_K \rho^{\otimes N} &= \sum_{x^N \in T} \sum_{y^N \in \mathbb{F}_2^N} p_{X^N} (y^N) \ket{x^N} \bra{x^N}  \ket{y^N} \bra{y^N} \\ &= \sum_{x^N \in T} \sum_{y^N\in \mathbb{F}_2^N} \delta_{x^N,y^N} \,p_{X^N} (y^N) \ket{x^N} \bra{y^N} \\ &= \sum_{x^N \in T} p_{X^N} (x^N) \ket{x^N} \bra{x^N},
    \end{align}
    where $T$ is the set of correctable errors for the polar syndrome decoder.
    Thus, we find that
    \begin{equation}
        \Tr(\Pi^N_K \rho^{\otimes N}) = \sum_{x^N \in T} p_{X^N} (x^N) \label{eq:succ_prob}
    \end{equation}
    and, by Proposition~\ref{prop:error}, this equals the probability of success for the quantum compression scheme.    
    Without loss of generality, suppose $\rho$ has a spectral decomposition
    \begin{align}
        \rho = (1 - p) \ket{0} \bra{0} + p\ket{1} \bra{1}, 
    \end{align}
    where $p \in [0, 1/2]$.
    Then, $p_{X^N} (x^N)$ equals the distribution of error sequences for the BSC$(p)$ and~\eqref{eq:succ_prob} also equals the probability of decoding success for the polar syndrome decoder on a BSC$(p)$.
    Moreover, it follows that $- \Tr(\rho \ln \rho) = h(p)$.
   
    % Classical part of proof
    Given the performance equivalence between polar syndrome decoding on the BSC$(p)$ and the proposed quantum compression scheme, the remainder of our analysis is classical.
    First, we recall that the capacity of a BSC$(p)$ is $C = 1 - h(p)$, where $h\colon [0,1] \to [0,1]$ is the binary entropy function.
    Next, we observe that the code rate of the polar code is given by
    \[ K/N \leq 1-h(p)-\delta = C-\delta. \]
    Sequences of polar codes with rate converging to $C-\delta$ achieve vanishing block error rate on the BSC~\cite{arikan_2009}.
    Thus, it follows that probability of success $\Tr(\Pi^N_K \rho^{\otimes N})$ for the quantum compression protocol satisfies $P_\nu = 1-\Tr(\Pi^N_K \rho^{\otimes N})  \to 0$.
    Likewise, the proposed compression protocol sends $N-K = N(1-K/N) \leq N(h(p)+\delta)$ qubits which can be made arbitrarily close to the Schumacher limit of $N h(p)$ qubits.
    \end{proof}
%
\iffalse
    \begin{prop}
        For all $0 < \delta < 1$ we have $\lim_{\nu \to \infty} P_\delta^+ = W$ where $P_\delta^+$ and $W$ are the proportion of bits with capacity less than $\delta$ - i.e. frozen bits - and the symmetric bit channel capacity, respectively.
    \end{prop}
    \begin{proof}
        This is the statement of \cite[Theorem~1]{arikan_2009}.
    \end{proof}
\fi

\section{Numerical Results}
\label{Numerical Results}
    We plot the probability of successful quantum state compression with our protocol, as a function of intrinsic bit-flip probability for various design error probabilities, for blocklengths 8 and 16 (Figures \ref{fig:typicality8} and \ref{fig:typicality16}). We also plot the performance of Schumacher compression on the same blocklengths (Figures \ref{fig:schumacher8} and \ref{fig:schumacher16}) and compare the two protocols (Figures \ref{fig:combined8} and \ref{fig:combined16}). The quantum polar codes used here are embedded classical polar codes selected to minimize error rates (as determined via Monte Carlo simulation).
    For blocklengths 8 and 16, our protocol performs as we would expect: Figures \ref{fig:typicality8} and Figures \ref{fig:typicality16} show that the probability of successful compression strictly decreases as entropy increases and the compression rates of codes designed via Monte Carlo simulation strictly increase as design error probability increases.
    Schumacher compression, on the other hand, does not perform in like manner. Figures \ref{fig:schumacher8} and \ref{fig:schumacher16} show no clear relationship between source entropy and compression performance. We attribute this to the short blocklengths, dictated by computing power, on which we simulate Schumacher compression. For larger blocklengths, we expect this protocol to achieve higher accuracy and lower compression rates as source entropy decreases.
    Last, Figures \ref{fig:combined8} and \ref{fig:combined16} demonstrate that our protocol achieves superior succss probabilities and compression rates, as compared to Schumacher compression, for blocklengths 8 and 16. Moreover, changing the $\delta$ value used in Schumacher compression doesn't change this discrepancy. Evidently, for short blocklengths, our protocol's notion of typicality corresponds more closely to the set of likely messages than $\delta$-typicality does, though these notions are equivalent in the limit of large block lengths in the sense that both compression protocol's are capacity-achieving.

    \begin{figure}
        \centering
        \begin{subfigure}[b]{0.45\textwidth}
            \centering
            \input{Tex_graphs/combined_blocklength8}
            \caption[width=\textwidth]{Probability of success as a function of binary entropy bit-flip probability for blocklength 8. Horizontal lines indicate compression rates.}
            \label{fig:typicality8}
        \end{subfigure}
        \hfill
        \begin{subfigure}[b]{0.45\textwidth}
            \centering
            \input{Tex_graphs/typicality_blocklength16}
            \caption[width=\textwidth]{Probability of success as a function of binary entropy bit-flip probability for blocklength 16. Horizontal lines indicate compression rates.}
            \label{fig:typicality16}
        \end{subfigure}
        \\[4ex]
        \begin{subfigure}[b]{0.45\textwidth}
            \centering
            % This file was created with tikzplotlib v0.10.1.
\begin{tikzpicture}

\definecolor{darkgray176}{RGB}{176,176,176}
\definecolor{darkorange25512714}{RGB}{255,127,14}
\definecolor{lightgray204}{RGB}{204,204,204}
\definecolor{steelblue31119180}{RGB}{31,119,180}

\begin{axis}[
legend cell align={left},
legend style={
  fill opacity=0.25,
  draw opacity=1,
  text opacity=1,
  draw=lightgray204, font=\tiny
},
tick align=outside,
tick pos=left,
x grid style={darkgray176},
xlabel={Binary Entropy of Bit-Flip Probability},
xmajorgrids,
xmin=0.0348472206030655, xmax=1.04565735704564,
xminorgrids,
xtick style={color=black},
y grid style={darkgray176},
ymajorgrids,
ymin=-0.05, ymax=1.05,
yminorgrids,
ytick style={color=black}
]
\addplot [semithick, steelblue31119180, mark=*, mark size=1, mark options={solid}, only marks]
table {%
0.08079313589591 0.9703
0.11144487505864 0.9561127042046
0.13973571890493 0.94206536090937
0.16625232509369 0.92815628625094
0.19134332888486 0.91438479636586
0.21524030552926 0.90075020739071
0.23810840247231 0.88725183546206
0.26007138410538 0.87388899671648
0.28122546891781 0
0.30164760686926 0
0.32140073796551 0
0.34053729470419 0
0.35910162564855 0
0.37713172616107 0
0.39466050753235 0
0.41171674886984 0
0.42832582509175 0
0.44451027322791 0
0.46029023957512 0
0.47568383748573 0
0.49070743705889 0
0.50537590220308 0
0.51970278650431 0
0.53370049647608 0
0.54738042871008 0
0.5607530859425 0
0.57382817593746 0
0.58661469625295 0
0.59912100732124 0
0.61135489578938 0
0.62332362968932 0
0.63503400671278 0
0.64649239663462 0
0.6577047787442 0
0.66867677499721 0
0.67941367948115 0
0.68992048469278 0
0.70020190504564 0
0.71026239796283 0
0.72010618285608 0
0.7297372582487 0
0.73915941726285 0
0.74837626166101 0
0.7573912146056 0
0.76620753227882 0.40415144444444
0.7748283144864 0.40778599023291
0.7832565143529 0.41123499532515
0.79149494720314 0.41450051131146
0.79954629871195 0.41758458978212
0.80741313239617 0.42048928232741
0.815097896512 0.42321664053761
0.82260293041525 0.42576871600301
0.82993047043483 0.42814756031388
0.83708265530399 0.43035522506052
0.84406153118965 0.43239376183321
0.85086905635506 0.43426522222222
0.85750710548793 0.43597165781785
0.86397747372248 0.43751512021037
0.87028188038109 0.43889766099007
0.87642197245864 0.44012133174722
0.88239932787036 0.44118818407213
0.88821545848187 0.44210026955506
0.89387181293858 0.44285963978629
0.89936977930964 0.44346834635612
0.90471068756045 0.44392844085483
0.90989581186646 0.44424197487269
0.91492637277973 0.44441
0.91980353925865 0.44443756782703
0.92452843057062 0.44432372994407
0.92910211807613 0.4440715379414
0.93352562690249 0.443683043409
0.93779993751426 0.44316029793806
0.94192598718729 0.44250535311796
0.9459046713924 0.44172026053928
0.949736845094 0.4408070717923
0.95342332396981 0.43976783846732
0.9569648855548 0.4386046121546
0.96036227031548 0.70916666666667
0.96361618265685 0.71249008264463
0.96672729186733 0.71567245179063
0.96969623300365 0.71871377410468
0.97252360771962 0.72161404958678
0.97520998504141 0.72437327823692
0.97775590209196 0.7269914600551
0.98016186476698 0.72946859504132
0.9824283483647 0.92478377557949
0.98455579817158 0.92216209118736
0.98654463000556 0.91948018407213
0.9883952307189 0.9167373703704
0.99010795866185 1
0.99168314410887 1
0.99312108964848 1
0.99442207053802 1
0.99558633502432 1
0.9966141046313 1
0.99750557441528 1
0.99826091318876 1
0.9988802637133 1
0.99936374286211 1
0.9997114417528 1
};
\addlegendentry{Probability of Success}
\addplot [semithick, darkorange25512714, mark=*, mark size=1, mark options={solid}, only marks]
table {%
0.08079313589591 0.33333333333333
0.11144487505864 0.33333333333333
0.13973571890493 0.33333333333333
0.16625232509369 0.33333333333333
0.19134332888486 0.33333333333333
0.21524030552926 0.33333333333333
0.23810840247231 0.33333333333333
0.26007138410538 0.33333333333333
0.28122546891781 1
0.30164760686926 1
0.32140073796551 1
0.34053729470419 1
0.35910162564855 1
0.37713172616107 1
0.39466050753235 1
0.41171674886984 1
0.42832582509175 1
0.44451027322791 1
0.46029023957512 1
0.47568383748573 1
0.49070743705889 1
0.50537590220308 1
0.51970278650431 1
0.53370049647608 1
0.54738042871008 1
0.5607530859425 1
0.57382817593746 1
0.58661469625295 1
0.59912100732124 1
0.61135489578938 1
0.62332362968932 1
0.63503400671278 1
0.64649239663462 1
0.6577047787442 1
0.66867677499721 1
0.67941367948115 1
0.68992048469278 1
0.70020190504564 1
0.71026239796283 1
0.72010618285608 1
0.7297372582487 1
0.73915941726285 1
0.74837626166101 1
0.7573912146056 1
0.76620753227882 0.66666666666667
0.7748283144864 0.66666666666667
0.7832565143529 0.66666666666667
0.79149494720314 0.66666666666667
0.79954629871195 0.66666666666667
0.80741313239617 0.66666666666667
0.815097896512 0.66666666666667
0.82260293041525 0.66666666666667
0.82993047043483 0.66666666666667
0.83708265530399 0.66666666666667
0.84406153118965 0.66666666666667
0.85086905635506 0.66666666666667
0.85750710548793 0.66666666666667
0.86397747372248 0.66666666666667
0.87028188038109 0.66666666666667
0.87642197245864 0.66666666666667
0.88239932787036 0.66666666666667
0.88821545848187 0.66666666666667
0.89387181293858 0.66666666666667
0.89936977930964 0.66666666666667
0.90471068756045 0.66666666666667
0.90989581186646 0.66666666666667
0.91492637277973 0.66666666666667
0.91980353925865 0.66666666666667
0.92452843057062 0.66666666666667
0.92910211807613 0.66666666666667
0.93352562690249 0.66666666666667
0.93779993751426 0.66666666666667
0.94192598718729 0.66666666666667
0.9459046713924 0.66666666666667
0.949736845094 0.66666666666667
0.95342332396981 0.66666666666667
0.9569648855548 0.66666666666667
0.96036227031548 1
0.96361618265685 1
0.96672729186733 1
0.96969623300365 1
0.97252360771962 1
0.97520998504141 1
0.97775590209196 1
0.98016186476698 1
0.9824283483647 1
0.98455579817158 1
0.98654463000556 1
0.9883952307189 1
0.99010795866185 1
0.99168314410887 1
0.99312108964848 1
0.99442207053802 1
0.99558633502432 1
0.9966141046313 1
0.99750557441528 1
0.99826091318876 1
0.9988802637133 1
0.99936374286211 1
0.9997114417528 1
};
\addlegendentry{Compression Rate}
\end{axis}

\end{tikzpicture}
            \caption[width=\textwidth]{Probability of success and compression rate for Schumacher compression ($\delta=0.05$) as a function of source entropy for blocklength 8}
            \label{fig:schumacher8}
        \end{subfigure}
        \hfill
        \begin{subfigure}[b]{0.45\textwidth}
            \centering
            % This file was created with tikzplotlib v0.10.1.
\begin{tikzpicture}

\definecolor{darkgray176}{RGB}{176,176,176}
\definecolor{darkorange25512714}{RGB}{255,127,14}
\definecolor{lightgray204}{RGB}{204,204,204}
\definecolor{steelblue31119180}{RGB}{31,119,180}

\begin{axis}[
legend cell align={left},
legend style={
  fill opacity=0.25,
  draw opacity=1,
  text opacity=1,
  draw=lightgray204, font=\tiny
},
tick align=outside,
tick pos=left,
x grid style={darkgray176},
xlabel={Binary Entropy of Bit-Flip Probability},
xmajorgrids,
xmin=0.0348472206030655, xmax=1.04565735704564,
xminorgrids,
xtick style={color=black},
y grid style={darkgray176},
ymajorgrids,
ymin=-0.05, ymax=1.05,
yminorgrids,
ytick style={color=black}
]
\addplot [semithick, steelblue31119180, mark=*, mark size=1, mark options={solid}, only marks]
table {%
0.08079313589591 0.960596
0.11144487505864 0.94191587920276
0.13973571890493 0.92350952804297
0.16625232509369 0.90537426831569
0.19134332888486 0.88750742507875
0.21524030552926 0.86990633665279
0.23810840247231 0.85256835462127
0.26007138410538 0.83549084383045
0.28122546891781 0
0.30164760686926 0
0.32140073796551 0
0.34053729470419 0
0.35910162564855 0
0.37713172616107 0
0.39466050753235 0
0.41171674886984 0
0.42832582509175 0
0.44451027322791 0
0.46029023957512 0
0.47568383748573 0
0.49070743705889 0
0.50537590220308 0
0.51970278650431 0
0.53370049647608 0
0.54738042871008 0
0.5607530859425 0
0.57382817593746 0
0.58661469625295 0
0.59912100732124 0
0.61135489578938 0
0.62332362968932 0
0.63503400671278 0
0.64649239663462 0.38438421402016
0.6577047787442 0.3888152
0.66867677499721 0.39293703974073
0.67941367948115 0.39675617296933
0.68992048469278 0.40027882636159
0.70020190504564 0.40351121354205
0.71026239796283 0.406459495084
0.72010618285608 0.40912977850967
0.7297372582487 0.4115281182898
0.73915941726285 0.41366051584411
0.74837626166101 0.41553291954102
0.7573912146056 0.41715122469773
0.76620753227882 0.41852127358025
0.7748283144864 0.4196488554033
0.7832565143529 0.42053970633049
0.79149494720314 0.42119950947407
0.79954629871195 0.42163389489516
0.80741313239617 0.42184843960362
0.815097896512 0.42184866755809
0.82260293041525 0.42164004966601
0.82993047043483 0.42122800378356
0.83708265530399 0.42061789471572
0.84406153118965 0.41981503421624
0.85086905635506 0.41882468098765
0.85750710548793 0.4176520406813
0.86397747372248 0.41630226589714
0.87028188038109 0.41478045618415
0.87642197245864 0.4130916580399
0.88239932787036 0.41124086491087
0.88821545848187 0.40923301719217
0.89387181293858 0.40707300222781
0.89936977930964 0.4047656543105
0.90471068756045 0.68669900436877
0.90989581186646 0.6886199339794
0.91492637277973 0.6903184
0.91980353925865 0.69179625456713
0.92452843057062 0.69305525634306
0.92910211807613 0.69409727051568
0.93352562690249 0.69492416879845
0.93779993751426 0.6955378494305
0.94192598718729 0.69594023717642
0.9459046713924 0.69613328332665
0.949736845094 0.69611896569706
0.95342332396981 0.69589928862919
0.9569648855548 0.69547628299
0.96036227031548 0.6948520061728
0.96361618265685 0.69402854209549
0.96672729186733 0.69300800120212
0.96969623300365 0.69179252046234
0.97252360771962 0.84643499132436
0.97520998504141 0.8492273384884
0.97775590209196 0.8518738171861
0.98016186476698 0.85437491979919
0.9824283483647 0.96824963617643
0.98455579817158 0.96676557166151
0.98654463000556 0.9652300794857
0.9883952307189 0.96364198506173
0.99010795866185 1
0.99168314410887 1
0.99312108964848 1
0.99442207053802 1
0.99558633502432 1
0.9966141046313 1
0.99750557441528 1
0.99826091318876 1
0.9988802637133 1
0.99936374286211 1
0.9997114417528 1
};
\addlegendentry{Probability of Success}
\addplot [semithick, darkorange25512714, mark=*, mark size=1, mark options={solid}, only marks]
table {%
0.08079313589591 0.3
0.11144487505864 0.3
0.13973571890493 0.3
0.16625232509369 0.3
0.19134332888486 0.3
0.21524030552926 0.3
0.23810840247231 0.3
0.26007138410538 0.3
0.28122546891781 1
0.30164760686926 1
0.32140073796551 1
0.34053729470419 1
0.35910162564855 1
0.37713172616107 1
0.39466050753235 1
0.41171674886984 1
0.42832582509175 1
0.44451027322791 1
0.46029023957512 1
0.47568383748573 1
0.49070743705889 1
0.50537590220308 1
0.51970278650431 1
0.53370049647608 1
0.54738042871008 1
0.5607530859425 1
0.57382817593746 1
0.58661469625295 1
0.59912100732124 1
0.61135489578938 1
0.62332362968932 1
0.63503400671278 1
0.64649239663462 1
0.6577047787442 1
0.66867677499721 1
0.67941367948115 1
0.68992048469278 1
0.70020190504564 1
0.71026239796283 1
0.72010618285608 1
0.7297372582487 1
0.73915941726285 1
0.74837626166101 1
0.7573912146056 1
0.76620753227882 1
0.7748283144864 1
0.7832565143529 1
0.79149494720314 1
0.79954629871195 1
0.80741313239617 1
0.815097896512 1
0.82260293041525 1
0.82993047043483 1
0.83708265530399 1
0.84406153118965 1
0.85086905635506 1
0.85750710548793 1
0.86397747372248 1
0.87028188038109 1
0.87642197245864 1
0.88239932787036 1
0.88821545848187 1
0.89387181293858 1
0.89936977930964 1
0.90471068756045 1
0.90989581186646 1
0.91492637277973 1
0.91980353925865 1
0.92452843057062 1
0.92910211807613 1
0.93352562690249 1
0.93779993751426 1
0.94192598718729 1
0.9459046713924 1
0.949736845094 1
0.95342332396981 1
0.9569648855548 1
0.96036227031548 1
0.96361618265685 1
0.96672729186733 1
0.96969623300365 1
0.97252360771962 1
0.97520998504141 1
0.97775590209196 1
0.98016186476698 1
0.9824283483647 1
0.98455579817158 1
0.98654463000556 1
0.9883952307189 1
0.99010795866185 1
0.99168314410887 1
0.99312108964848 1
0.99442207053802 1
0.99558633502432 1
0.9966141046313 1
0.99750557441528 1
0.99826091318876 1
0.9988802637133 1
0.99936374286211 1
0.9997114417528 1
};
\addlegendentry{Compression Rate}
\end{axis}

\end{tikzpicture}
            \caption[width=\textwidth]{Probability of success and compression rate for Schumacher compression ($\delta=0.05$) as a function of source entropy for blocklength 8}
            \label{fig:schumacher16}
        \end{subfigure}
    \end{figure}
    \begin{figure}[ht]\ContinuedFloat
        \begin{subfigure}[b]{0.45\textwidth}
            \centering
            \input{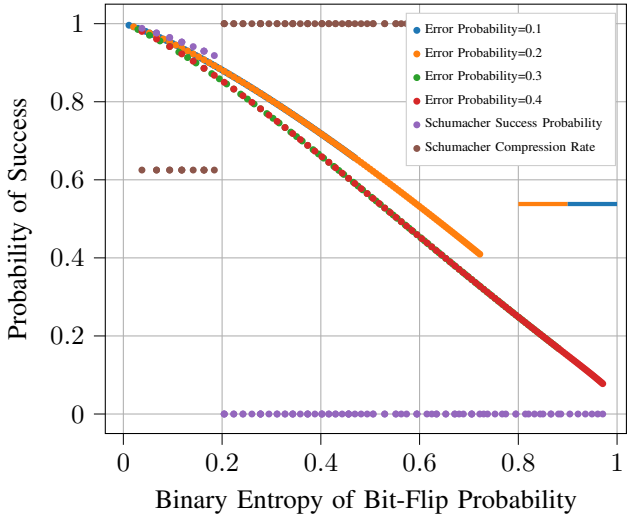}
            \caption[width=0.5\textwidth]{Comparison of Schumacher's protocol ($\delta=0.05$) with ours for blocklength 8. Horizontal lines indicate compression rates.}
            \label{fig:combined8}
        \end{subfigure}
        \hfill
        \begin{subfigure}[b]{0.45\textwidth}
            \centering
            \input{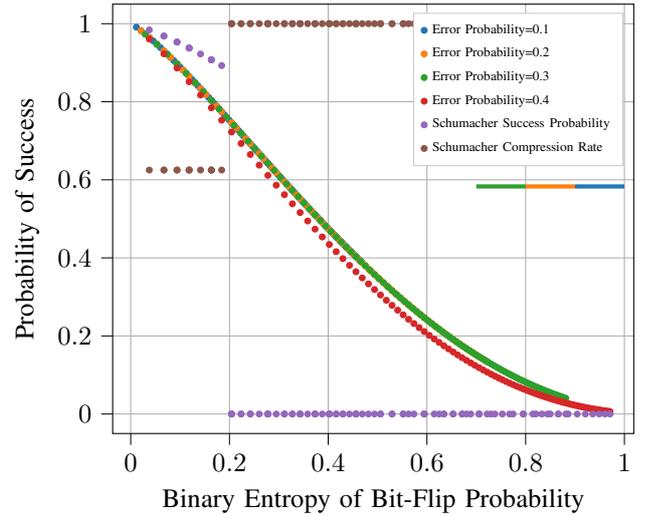}
            \caption[width=0.5\textwidth]{Comparison of Schumacher's protocol ($\delta=0.05$) with ours for blocklength 16. Horizontal lines indicate compression rates.}
            \label{fig:combined16}
        \end{subfigure}
        \caption{Numerical results for our compression protocol and performance comparison with Schumacher compression}
    \end{figure}

\section{Discussion}
\label{Discussion}
In this paper, we consider the problem of quantum state compression and propose an efficient solution using polar codes.
We provide an efficient quantum successive cancellation decoding algorithm based on the factor graph of polar codes.
This provides low-complexity compression and decompression protocols that allow Alice to reliably transmit a multi-qubit quantum state to Bob with a rate approaching the source entropy.
Since our algorithm is based on lifting a classical message-passing decoder to operate on a quantum superposition, the analysis only depends on the classical performance of the polar code.
Thus, we can achieve the Schumacher compression limit $S(\rho)$ using this protocol. Moreover, our proof of proposition (\ref{asymptotically successful}), while nominally specific to polar codes, only uses the fact that polar codes are capacity-achieving in the limit of large block lengths. Thus, any capacity-achieving classical error-correcting code can be used for lossless quantum compression by embedding the corresponding syndrome source coding protocol into the quantum domain.
% should we just change the statement of prop 2 to be generic rather than specific to polar codes?
We have also implemented our algorithm for arbitrary length $N=2^\nu$ in Python. The code can be found on GitHub in the repository:
\begin{center}
\url{https://github.com/Aviemathelec1995/QSCpolar}.
\end{center}

    % Different notion of typicality but both result in capacity-approaching codes
    % In our protocol, a state is typical if and only if it is a coset leader with respect to a polar code, whereas in Schumacher's protocol, a state is typical if and only if it is $\delta$-typical. While these protocols have distinct notions of typicality, they are both lossless in the limit of large block lengths. Their performance for finite blocklengths, however, may differ.
    % % While these protocols have distinct notions of typicality, they are both lossless in the limit of large block lengths. Their performance for finite blocklengths, however, may differ
    % \\
    % While we present a compression scheme based on embedded polar codes any non-degenerate linear code may be embedded and used for lossless compression in the manner described here; such a compression scheme will be successful in the limit of large block lengths if the underlying code is capacity-approaching.
    % Efficient encoding and decoding and implications

\clearpage

\printbibliography
\newpage
\onecolumn
\begin{appendices}

\section{Efficient Decoding to Recover Compressed Quantum State using Polar Code}
\label{BP decoding}
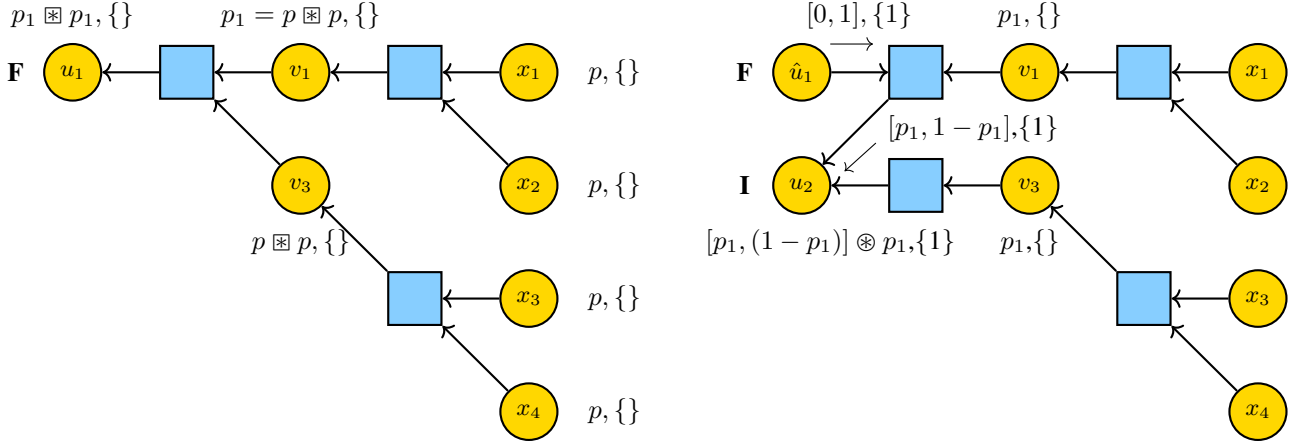
\begin{figure}[h]
    \centering
    \scalebox{1}{\begin{tikzpicture}%
    [scale=0.75,var/.style={font=\small,fill=bitcolor,draw,circle,thick,minimum size=7.5mm},%
    factor/.style={font=\small,fill=checkcolor,draw,rectangle,thick,minimum size=7mm},%
    weight/.style={font=\small}]
    
    % Nodes
    \node (u1) [var] at (-0.5,0) {$u_1$} node at (-1.5,0) {\textbf{F}}node at (-0.5,1) {$
    p_1\cnop p_1,\{\}$};
    %\node (u2) [var] at (-0.5,-2) {$u_2$};
    %\node (u3) [var] at (-0.5,-4) {$u_3$};
    %\node (u4) [var] at (-0.5,-6) {$u_4$};
    \node (f1) [factor] at (1.5,0) {};
    %\node (f2) [factor] at (1.5,-2) {};
    %\node (f3) [factor] at (1.5,-4) {};
    %\node (f4) [factor] at (1.5,-6) {};
    \node (v1) [var] at (3.5,0) {$v_1$}
    node at (3.5,1) {$p_1=
    p\cnop p,\{\}$};
    \node (v2) [var] at (3.5,-2) {$v_3$} node at (3.5,-3) {$
    p\cnop p,\{\}$};
    %\node (v3) [var] at (3.5,-4) {$v_2$};
    %\node (v4) [var] at (3.5,-6) {$v_4$};
    \node (g1) [factor] at (5.5,0) {};
    %\node (g2) [factor] at (5.5,-2) {};
    \node (g3) [factor] at (5.5,-4) {};
    %\node (g4) [factor] at (5.5,-6) {};
    \node (x1) [var] at (7.5,0) {$x_1$} node at (9,0) {$p,\{\}$};
    \node (x2) [var] at (7.5,-2) {$x_2$} node at (9,-2) {$p,\{\}$};
    \node (x3) [var] at (7.5,-4) {$x_3$}
    node at (9,-4) {$p,\{\}$};
    \node (x4) [var] at (7.5,-6) {$x_4$}
    node at (9,-6) {$p,\{\}$};
    
    % Edges
    \draw[->,thick] (x1) -- (g1);      
    \draw[->,thick] (x2) -- (g1);      
    %\draw[thick] (x2) -- (g2);      
    \draw[->,thick] (x3) -- (g3);      
    \draw[->,thick] (x4) -- (g3);      
    %\draw[thick] (x4) -- (g4);
    \draw[->,thick] (g1) -- (v1);
    %\draw[thick] (g2) -- (v3);
    \draw[->,thick] (g3) -- (v2);
    %\draw[thick] (g4) -- (v4);
    \draw[->,thick] (v1) -- (f1);      
    \draw[->,thick] (v2) -- (f1);      
    %\draw[thick] (v2) -- (f2);      
    %\draw[thick] (v3) -- (f3);      
    %\draw[thick] (v4) -- (f3);      
    %\draw[thick] (v4) -- (f4);
    \draw[->,thick] (f1) -- (u1);
    %\draw[thick] (f2) -- (u2);
    %\draw[thick] (f3) -- (u3);
    %\draw[thick] (f4) -- (u4);

\end{tikzpicture}\hspace{5mm}\begin{tikzpicture}%
    [scale=0.75,var/.style={font=\small,fill=bitcolor,draw,circle,thick,minimum size=7.5mm},%
    factor/.style={font=\small,fill=checkcolor,draw,rectangle,thick,minimum size=7mm},%
    weight/.style={font=\small}]
    
    % Nodes
    \node (u1) [var] at (-0.5,0) {$
\hat{u}_1$}  node at (-1.5,0) {\textbf{F}} node at (0.5,1) {$
    [0,1],\{1\}$};
    \draw[->] (0,0.5)--(0.75,0.5) ;
    \node (u2) [var] at (-0.5,-2) {$u_2$} node at (-1.5,-2) {\textbf{I}} node at (0,-3) {$[p_1,(1-p_1)]\vnop p_1$,\{1\}};
    %\node (u3) [var] at (-0.5,-4) {$u_3$};
    %\node (u4) [var] at (-0.5,-6) {$u_4$};
    \node (f1) [factor] at (1.5,0) {}
    node at (2.5,-1) {$[p_1,1-p_1]$,\{1\}} ;
    \draw[->] (0.8,-1.25)--(0.25,-1.75);
    \node (f2) [factor] at (1.5,-2) {};
    %\node (f3) [factor] at (1.5,-4) {};
    %\node (f4) [factor] at (1.5,-6) {};
    \node (v1) [var] at (3.5,0) {$v_1$} node at (3.5,1) {$p_1,\{\}$};
    \node (v2) [var] at (3.5,-2) {$v_3$}
    node at (3.5,-3) {$p_1$,\{\}};
    %\node (v3) [var] at (3.5,-4) {$v_4$};
    %\node (v4) [var] at (3.5,-6) {$v_4$};
    \node (g1) [factor] at (5.5,0) {};
    %\node (g2) [factor] at (5.5,-2) {};
    \node (g3) [factor] at (5.5,-4) {};
    %\node (g4) [factor] at (5.5,-6) {};
    \node (x1) [var] at (7.5,0) {$x_1$};
    \node (x2) [var] at (7.5,-2) {$x_2$};
    \node (x3) [var] at (7.5,-4) {$x_3$};
    \node (x4) [var] at (7.5,-6) {$x_4$};
    
    % Edges
    \draw[->,thick] (x1) -- (g1);      
    \draw[->,thick] (x2) -- (g1);      
    %\draw[thick] (x2) -- (g2);      
    \draw[->,thick] (x3) -- (g3);      
    \draw[->,thick] (x4) -- (g3);      
    %\draw[thick] (x4) -- (g4);
    \draw[->,thick] (g1) -- (v1);
    %\draw[thick] (g2) -- (v3);
    \draw[->,thick] (g3) -- (v2);
    %\draw[thick] (g4) -- (v4);
    \draw[->,thick] (v1) -- (f1);      
    %\draw[->,thick] (v2) -- (f1);      
    \draw[->,thick] (v2) -- (f2);      
    %\draw[thick] (v3) -- (f3);      
    %\draw[thick] (v4) -- (f3);      
    %\draw[thick] (v4) -- (f4);
    \draw[<-,thick] (f1) -- (u1);
    \draw[->,thick] (f1) -- (u2);
    \draw[->,thick] (f2) -- (u2);
    %\draw[thick] (f3) -- (u3);
    %\draw[thick] (f4) -- (u4);

\end{tikzpicture}}
    \vspace{2mm}
    \caption{Stage 1 (left) and 2 (right) of the message passing algorithm to recover the quantum state. '\textbf{F}' refers to frozen qubit and '\textbf{I}' corresponds to info qubit. }
    \label{polar4dec1}
\end{figure}
    
Before diving into the complete description of the general recursive algorithm to recover the quantum state from its compressed form using length $N=2^n$ polar code, we start by describing the necessary building blocks for the algorithm.  The key challenge in the lifted algorithm is to generalize the check node and the bit node combining rules for conditional messages such that the message probabilities are combined coherently in the right order. In Algorithms \ref{cnop-algo} and \ref{vnop-algo}, we discuss the pseudo-code for check-node and bit-node combining operation for conditional messages. We refer to these pseudo-codes as \textbf{CNOP} and \textbf{BNOP} respectively when we describe the complete decoding algorithm. 
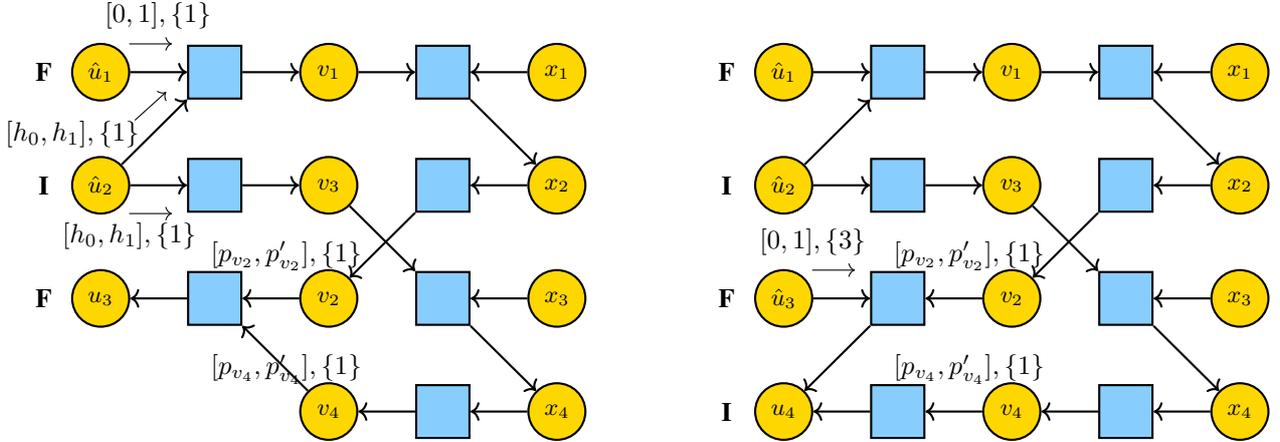
\begin{figure}[h]
    \centering
    \scalebox{1}{\begin{tikzpicture}%
    [scale=0.75,var/.style={font=\small,fill=bitcolor,draw,circle,thick,minimum size=7.5mm},%
    factor/.style={font=\small,fill=checkcolor,draw,rectangle,thick,minimum size=7mm},%
    weight/.style={font=\small}]
    
    % Nodes
    \node (u1) [var] at (-0.5,0) {$
\hat{u}_1$}  node at (-1.5,0) {\textbf{F}} node at (0.5,1) {$
    [0,1],\{1\}$};
      \draw[->] (0,0.5)--(0.75,0.5) ;
    \node (u2) [var] at (-0.5,-2) {$\hat{u}_2$} node at (-1.5,-2) {\textbf{I}} node at (-1,-1.1) {$[h_0,h_1],\{1\}$}
    node at (0,-2.9) {$[h_0,h_1],\{1\}$};
    \draw[->] (0.1,-0.85)--(0.65,-0.35) ;
      \draw[->] (0,-2.5)--(0.75,-2.5) ;
    \node (u3) [var] at (-0.5,-4) {$u_3$}node at (-1.5,-4) {\textbf{F}};
    %\node (u4) [var] at (-0.5,-6) {$u_4$};
    \node (f1) [factor] at (1.5,0) {};
    \node (f2) [factor] at (1.5,-2) {};
    \node (f3) [factor] at (1.5,-4) {};
  %  \node (f4) [factor] at (1.5,-6) {};
    \node (v1) [var] at (3.5,0) {$v_1$};
    \node (v2) [var] at (3.5,-2) {$v_3$};
    \node (v3) [var] at (3.5,-4) {$v_2$} node at (2.75,-3.25) {$[p_{v_2},p_{v_2}'],\{1\}$};
    \node (v4) [var] at (3.5,-6) {$v_4$} node at (2.75,-5.25) {$[p_{v_4},p_{v_4}'],\{1\}$};
    \node (g1) [factor] at (5.5,0) {};
    \node (g2) [factor] at (5.5,-2) {};
    \node (g3) [factor] at (5.5,-4) {};
    \node (g4) [factor] at (5.5,-6) {};
    \node (x1) [var] at (7.5,0) {$x_1$};
    \node (x2) [var] at (7.5,-2) {$x_2$};
    \node (x3) [var] at (7.5,-4) {$x_3$};
    \node (x4) [var] at (7.5,-6) {$x_4$};
    
    % Edges
    \draw[->,thick] (x1) -- (g1);      
    \draw[<-,thick] (x2) -- (g1); 
    \draw[->,thick](u2) -- (f1);
    \draw[->,thick] (x2) -- (g2);      
    \draw[->,thick] (x3) -- (g3);      
    \draw[<-,thick] (x4) -- (g3);      
    \draw[->,thick] (x4) -- (g4);
    \draw[<-,thick] (g1) -- (v1);
    \draw[->,thick] (g2) -- (v3);
    \draw[<-,thick] (g3) -- (v2);
    \draw[->,thick] (g4) -- (v4);
    \draw[<-,thick] (v1) -- (f1);      
    %\draw[->,thick] (v2) -- (f1);      
    \draw[<-,thick] (v2) -- (f2);      
    \draw[->,thick] (v3) -- (f3);      
    \draw[->,thick] (v4) -- (f3);      
    %\draw[thick] (v4) -- (f4);
    \draw[<-,thick] (f1) -- (u1);
    %\draw[->,thick] (f1) -- (u2);
    \draw[<-,thick] (f2) -- (u2);
    \draw[->,thick] (f3) -- (u3);
    %\draw[thick] (f4) -- (u4);

\end{tikzpicture}\hspace{15mm}\begin{tikzpicture}%
    [scale=0.75,var/.style={font=\small,fill=bitcolor,draw,circle,thick,minimum size=7.5mm},%
    factor/.style={font=\small,fill=checkcolor,draw,rectangle,thick,minimum size=7mm},%
    weight/.style={font=\small}]
    
    % Nodes
    \node (u1) [var] at (-0.5,0) {$
\hat{u}_1$} node at (-1.5,0) {\textbf{F}};
    \node (u2) [var] at (-0.5,-2) {$\hat{u}_2$} node at (-1.5,-2) {\textbf{I}};
    \node (u3) [var] at (-0.5,-4) {$\hat{u}_3$}node at (-1.5,-4) {\textbf{F}} node at (0,-3) {$
    [0,1],\{3\}$};
      \draw[->] (0,-3.5)--(0.75,-3.5) ;
    \node (u4) [var] at (-0.5,-6) {$u_4$}node at (-1.5,-6) {\textbf{I}};
    \node (f1) [factor] at (1.5,0) {};
    \node (f2) [factor] at (1.5,-2) {};
    \node (f3) [factor] at (1.5,-4) {};
    \node (f4) [factor] at (1.5,-6) {};
    \node (v1) [var] at (3.5,0) {$v_1$};
    \node (v2) [var] at (3.5,-2) {$v_3$};
    \node (v3) [var] at (3.5,-4) {$v_2$}
    node at (2.75,-3.25) {$[p_{v_2},p_{v_2}'],\{1\}$};
    \node (v4) [var] at (3.5,-6) {$v_4$} node at (2.75,-5.25) {$[p_{v_4},p_{v_4}'],\{1\}$};
    \node (g1) [factor] at (5.5,0) {};
    \node (g2) [factor] at (5.5,-2) {};
    \node (g3) [factor] at (5.5,-4) {};
    \node (g4) [factor] at (5.5,-6) {};
    \node (x1) [var] at (7.5,0) {$x_1$};
    \node (x2) [var] at (7.5,-2) {$x_2$};
    \node (x3) [var] at (7.5,-4) {$x_3$};
    \node (x4) [var] at (7.5,-6) {$x_4$};
    
    % Edges
    \draw[->,thick] (x1) -- (g1);      
    \draw[<-,thick] (x2) -- (g1);  
     \draw[->,thick](u2) -- (f1);
    \draw[->,thick] (x2) -- (g2);      
    \draw[->,thick] (x3) -- (g3);      
    \draw[<-,thick] (x4) -- (g3);      
    \draw[->,thick] (x4) -- (g4);
    \draw[<-,thick] (g1) -- (v1);
    \draw[->,thick] (g2) -- (v3);
    \draw[<-,thick] (g3) -- (v2);
    \draw[->,thick] (g4) -- (v4);
    \draw[<-,thick] (v1) -- (f1);      
    %\draw[->,thick] (v2) -- (f1);      
    \draw[<-,thick] (v2) -- (f2);      
    \draw[->,thick] (v3) -- (f3);      
    \draw[<-,thick] (u4) -- (f3);      
    \draw[->,thick] (v4) -- (f4);
    \draw[<-,thick] (f1) -- (u1);
    %\draw[->,thick] (f1) -- (u2);
    \draw[<-,thick] (f2) -- (u2);
    \draw[<-,thick] (f3) -- (u3);
    \draw[->,thick] (f4) -- (u4);

\end{tikzpicture}}
    \vspace{2mm}
    \caption{Stage 3 (left) and 4 (right) of the message passing algorithm to recover the quantum state. '\textbf{F}' refers to frozen qubit and '\textbf{I}' corresponds to info qubit. }
    \label{polar4dec3}
\end{figure}

Another essential component of the decoding procedure is a conditional unitary which is to be applied to the information qubits based on hard decisions and the states of frozen qubits. The conditional unitary for the information qubit indexed by $i$, which depends on the frozen qubits with index list $L_{F_{i}}$ and conditional message $\{h_{vec},L_{F_{i}}\}$ where $h_{vec}\in \{0,1\}^{2^{|L_{F_{i}}|}}$, is of the form 
\begin{align}
U_{i,L_{F_i}}(h_{vec})=\sum_{f_0f_1..f_{|L_{F_{i}}|}}\ketbra{f_0f_1..f_{|L_{F_{i}}|}}{f_0f_1..f_{|L_{F_{i}}|}}\otimes \sigma_{x}^{h_{f_0f_1..f_{|L_{F_{i}}|}}}
\end{align}
where $f_0f_1..f_{|L_{F_{i}}|}\in \{0,1\}^{|L_{F_{i}}|}$ is the binary representation of the indices of $h_{vec}$ and $h_{f_0f_1..f_{|L_{F_{i}}|}}$ is the corresponding hard decision value.
\begin{algorithm}[h]
\caption{Check-node combining for conditional messages (CNOP)\label{cnop-algo}}
\begin{algorithmic}[1]
      \Require First conditional message $\{p_{vec},L_{F}\}$, second conditional message $\{p_{vec}',L_{F'}\}$.
      \State Compute intersection of lists $L_{F_{\cap}}=L_{F}\cap L_{F'}$\\
      Compute $L_{F_1}=L_{F}-L_{F_{\cap}}$ and $L_{F_2}=L_{F'}-L_{F_{\cap}}$.\\
      Compute the union list $L_{F_{\cnop}}=L_{F_{\cap}}\cup L_{F_1}\cup L_{F_2}$\\
      Reorder the message probabilities in $p_{vec}$ and $p_{vec}'$ with respect to the frozen qubit index lists $L_{F_{\cap}}\cup L_{F_1}$ and $L_{F_{\cap}}\cup L_{F_2}$ respectively.
      \For{each $i$ in $[2^{|L_{F_{\cap}}|}]$}
      \For{each each $j$ in $[2^{|L_{F_1}|}]$ }
      \For{each each each $k$ in $[2^{|L_{F_2}|}]$}\\
    \quad  \qquad Select the $(i\times 2^{|L_{F_1}|}+j)$\textsuperscript{th} element $q$ from $p_{vec}$\\
   \quad   \qquad Select the $(i\times 2^{|L_{F_2}|}+k)$\textsuperscript{th} element $q'$ from $p_{vec}'$\\
  \quad \qquad   Compute $q_{new}=q\cnop q'$\\
\quad    \qquad  Add $q_{new}$ to the list $p^{\cnop}_{vec}$. 
    \EndFor
      \EndFor
      \EndFor\\
    Sort $L_{F_{\cnop}}$ in the ascending order of frozen qubit indices\\
    Reorder message probabilities in $p^{\cnop}_{vec}$ according to $L_{F_{\cnop}}$\\
    \Return  $\{p^{\cnop}_{vec},L_{F_{\cnop}}\}$
    \end{algorithmic}
\end{algorithm}
 
\begin{algorithm}[t]
\caption{Bit-node combining for conditional messages (BNOP) \label{vnop-algo}}
\begin{algorithmic}[1]
      \Require First conditional message $\{p_{vec},L_{F}\}$, second conditional message $\{p_{vec}',L_{F'}\}$.
      \State Compute intersection of lists $L_{F_{\cap}}=L_{F}\cap L_{F'}$\\
      Compute $L_{F_1}=L_{F}-L_{F_I}$ and $L_{F_2}=L_{F'}-L_{F_I}$.\\
      Compute the union list $L_{F_{\vnop}}=L_{F_{\cap}}\cup L_{F_1}\cup L_{F_2}$\\
      Reorder the message probabilities in $p_{vec}$ and $p_{vec}'$ with respect to the frozen qubit index lists $L_{F_{\cap}}\cup L_{F_1}$ and $L_{F_{\cap}}\cup L_{F_2}$ respectively.
      \For{each $i$ in $[2^{|L_{F_{\cap}}|}]$}
      \For{each each $j$ in $[2^{|L_{F_1}|}]$ }
      \For{each each each $k$ in $[2^{|L_{F_2}|}]$}\\
    \quad  \qquad Select the $(i\times 2^{|L_{F_1}|}+j)$\textsuperscript{th} element $q$ from $p_{vec}$\\
   \quad   \qquad Select the $(i\times 2^{|L_{F_2}|}+k)$\textsuperscript{th} element $q'$ from $p_{vec}'$\\
  \quad \qquad   Compute $q_{new}=q\vnop q'$\\
\quad    \qquad  Add $q_{new}$ to the list $p^{\vnop}_{vec}$. 
    \EndFor
      \EndFor
      \EndFor\\
    Sort $L_{F_{\vnop}}$ in the ascending order of frozen qubit indices\\
    Reorder message probabilities in $p^{\vnop}_{vec}$ according to $L_{F_{\vnop}}$\\
    \Return  $\{p^{\vnop}_{vec},L_{F_{\vnop}}\}$
    \end{algorithmic}
\end{algorithm}
In Algorithm.~\ref{polar_compression}, we provide the pseudocode of the lifted algorithm to recover the compressed quantum state using a suitable polar code. The decoder receives frozen qubits as the compressed quantum state. The information qubits are prepared in state $\ketbra{0}$. We denote the quantum system associated with index $i$ by $A_{i}$ and denote its state as $\ketbra{\psi}_{A_1\dots A_{N}}$. Let $\pi_{i,L_{F_i}}$ be the permutation such that 
\begin{align*}
    \pi_{i,L_{F_i}}(k)=\begin{cases}
     L_{F_{i}}(m),  \quad \text{if $k=m\leq|L_{F_{i}}|$}\\
     i, \quad \text{if $k=|L_{F_i}|+1$}.
     \end{cases}
\end{align*}
\begin{algorithm}[h]
     \caption{Lifted Algorithm to recover compressed quantum state using polar code (PolarSyndromeDecode)}\label{polar_compression}
     \begin{algorithmic}[1]
     \Require A quantum state $\ket{\psi}_{A_1\dots A_{N}}$, index set $\mathcal{I}$ containing indices of code bits such that $|\mathcal{I}|=N_\mathcal{I}$; conditional message list $L_{\mathcal{I}}$, code design vector $F \in \{I,f\}^{N_{\mathcal{I}}}$ (either information bit $I$ or frozen $f$).
\If{$N_\mathcal{I}=1$} \Comment{Recurse down to length 1}
 \If{$F(1) = I$} \Comment{If the bit is information bit}
  \qquad \State  Apply bit reversal on $\mathcal{I}(1)$ to find information qubit index $b_i=\text{bit-reversal}(\mathcal{I}(1))$ \\
 \qquad \quad Find conditional message for the information qubit $\{p_{vec},L_{F}\}=L_{\mathcal{I}}(1)$.\\
 \qquad \quad Apply hard decision; $h_{vec}=\text{hard-dec}(p_{vec})$\\
 \qquad \quad Apply conditional unitary; $\ketbra{\psi}_{A_1\dots A_{N}}\gets \tilde{U}_{b_i,L_{F}}(h_{vec})\ketbra{\psi}_{A_1\dots A_{N}}$.\\
  \qquad \Return quantum state $\ket{\psi}_{A_1\dots A_{N}}$, conditional message $\{h_{vec},L_{F}\}$
  \Else
   \qquad \State  Apply bit reversal on $\mathcal{I}(1)$ to find information qubit index $b_i=\text{bit-reversal}(\mathcal{I}(1))$\\
\qquad   \Return  quantum state $\ketbra{\psi}_{A_1\dots A_{N}}$, message from frozen bit $\{\{0,1\},b_{i}\}$
 \EndIf
\EndIf\\
Define $\mathcal{I}_{odd} = \{ \mathcal{I}(2i-1) \, | \, i\in [N_{\mathcal{I}}/2] \}$.\\
Create an empty conditional message list $L_{\mathcal{I}_{odd}}'$.
\For{each odd $i$ in $[N_\mathcal{I}]$}\\
\quad $L_{\mathcal{I}_{odd}}'(i)\gets$ CNOP$\left(L_{\mathcal{I}}(i),L_{\mathcal{I}}(i+1)\right)$.
\EndFor\\
$\ketbra{\psi}_{A_1\dots A_{N}},L_{\mathcal{I}_{odd}}^{new}\gets$ PolarSyndromeDecode$\left(\ketbra{\psi}_{A_1\dots A_{N}},\mathcal{I}_{odd},L_{\mathcal{I}_{odd}}',F([N_\mathcal{I}/2])\right)$\\
Create an empty conditional message list $L_{\mathcal{I}_{odd}}^{\cnop}$.
\For{each odd $i$ in $[N_\mathcal{I}]$}\\
\quad $L_{\mathcal{I}_{odd}}^{\cnop}(i)\gets$ CNOP$\left(L_{\mathcal{I}}(i),L_{\mathcal{I}_{odd}}^{new}(\lfloor i/2\rfloor+1)\right)$.
\EndFor\\
Define $\mathcal{I}_{even} = \{ \mathcal{I}(2i) \, | \, i\in [N_{\mathcal{I}}/2] \}$.\\
Create an empty conditional message list $L_{\mathcal{I}_{even}}'$.
\For{each even $i$ in $[N_\mathcal{I}]$}\\
\quad $L_{\mathcal{I}_{even}}'(i)\gets$ BNOP$\left(L_{\mathcal{I}}(i),L_{\mathcal{I}_{odd}}^{\cnop}(\lfloor i/2\rfloor)\right)$.
\EndFor\\
$\ketbra{\psi}_{A_1\dots A_{N}},L_{\mathcal{I}_{even}}^{new}\gets$ PolarSyndromeDecode$\left(\ketbra{\psi}_{A_1\dots A_{N}},\mathcal{I}_{even},L_{\mathcal{I}_{even}}',F(\{N_\mathcal{I}/2+1,\ldots,N_\mathcal{I}\})\right)$\\
Create an empty conditional message list $L_{\mathcal{I}_{even}}^{\vnop}$.
\For{each even $i$ in $[N_\mathcal{I}]$}\\
\quad $L_{\mathcal{I}_{even}}^{\vnop}(i)\gets$ CNOP$\left(L_{\mathcal{I}_{odd}}^{new}(\lfloor i/2\rfloor),L_{\mathcal{I}_{even}}^{new}(\lfloor i/2\rfloor)\right)$\\
\quad $L_{\mathcal{I}_{even}}^{\vnop}(i+1)\gets$ CNOP$\left(\{\{0\},\{\}\},L_{\mathcal{I}_{even}}^{new}(\lfloor i/2\rfloor)\right)$.
\EndFor\\
\Return The quantum state $\ketbra{\psi}_{A_1\dots A_{N}}$, conditional message list $L_{\mathcal{I}_{even}}^{\vnop}$
\end{algorithmic}
 \end{algorithm}

Let $Sw^{(N)}_{\pi_{i,L_{F_i}}}$ be a unitary which swaps the states in the quantum systems according to the permutation $ \pi_{i,L_{F_i}}$. We construct the unitary $\tilde{U}_{i,L_{F_i}}(h_{vec})$ which has the form 
\begin{align*}
    \tilde{U}_{i,L_{F_i}}(h_{vec})= Sw^{(N)}_{\pi_{i,L_{F_i}}} \left(U_{i,L_{F_{i}}}(h_{vec})\otimes I_{N-|L_{F_i}|-1}\right)Sw^{(N),\dagger}_{\pi_{i,L_{F_i}}}.
\end{align*}
The decoder uses the unitary $\tilde{U}_{i,L_{F_i}}(h_{vec})$ to decide whether or not to flip the information qubit in the quantum system $A_{i}$ on the state $\ketbra{\psi}_{A_1\dots A_{N}}$ based on the hard decision $h_{vec}$ obtained from message passing through factor graph where the messages are conditioned on the frozen qubits indexed by the list $L_{F_i}$. 
%To initiate the decoding algorithm the decoder takes $\{\{\{p\},\{\}\}_{i=1}^{N}\}$ as input where $p$ is source distribution after diagonalization on the source qubits.
\end{appendices}
\end{document}